\keywords{Parameterized Verification, Population Protocols, Parameterized Synthesis, Markov Decision Processes, Regular Cost
Functions, Games.}
\declaremathcommandPIE\BuchiG{\kl[\BuchiG#1#3]{\mathcal G}_{\!\mdp}#1\IfEmptyTF{#3}{}{^{(\GetExponent{#3})}}}
\knowledge{\BuchiG}{notion}
\declaremathcommand\targetstate{\kl[\reductiontargetstate]{t}}
\knowledge\reductiontargetstate{notion}
\newcommand{\val}{\text{val}}
\newcommand{\sem}[1]{\llbracket #1 \rrbracket}
\theoremstyle{plain}\newtheorem{theorem}[thm]{Theorem}
\theoremstyle{plain}\newtheorem{lemma}[thm]{Lemma}
\theoremstyle{plain}
\theoremstyle{plain}\newtheorem{problem}[thm]{Problem}
\theoremstyle{definition}\newtheorem{definition}[thm]{Definition}
\newmathcommandPIE\Nats{\mathbb{N}#1#2#3}
\newmathcommand\complexity[1]{\kl[\complexity]{||}#1\kl[\complexity]{||}}
\knowledge\complexity{math notion}
\newmathcommandPIE\Flows{\kl[\Flows#3]{\mathit{Flows}#3}#1#2}
\knowledge\Flows{math notion}
\newmathcommand\decFlows{\kl[\decFlows]{\mathit{Flows}}}
\knowledge\decFlows{math notion}
\newmathcommand\simpleFlows{\kl[\simpleFlows]{\mathit{Flows}}}
\knowledge\simpleFlows{math notion}
\newmathcommand\Capacities{\kl[\Capacities]{\mathit{Capacities}}}
\knowledge\Capacities{math notion}
\newmathcommand\Confs{\kl[\Confs]{\mathit{Confs}}}
\knowledge\Confs{math notion}
\newmathcommand\eFlows{\kl[\eFlows]{\mathit{Flows_{\omega}}}}
\knowledge\eFlows{math notion}
\newmathcommand\eConfs{\kl[\eConfs]{\mathit{Confs_{\omega}}}}
\knowledge\eConfs{math notion}
\newrobustcmd\flowarrow[1]{\mathbin{\kl[\flowarrow]{\rightarrow}^{#1}}}
\knowledge\flowarrow{math notion}
\newrobustcmd\flowpath[1]{\mathbin{\kl[\flowpath]{\leadsto}^{#1}}}
\knowledge\flowpath{math notion}
\newrobustcmd\flow{\kl[{\flow}]{\mathsf{flow}}}
\knowledge\flow{math notion}
\newrobustcmd\weight{\kl[{\weight}]{\mathsf{weight}}}
\knowledge\weight{math notion}
\newrobustcmd\Graph{\kl[\Graph]{\mathsf{Graph}}}
\knowledge\Graph{math notion}
\newmathcommand\source{\kl[\source]{\mathsf{source}}}
\newmathcommand\target{\kl[\target]{\mathsf{target}}}
\newmathcommand\dclosure[1]{\kl[\dclosure]{D}(#1)}
\knowledge\source{math notion}
\knowledge\target{math notion}
\newcounter{sarrow}
\newcommand{\st}{\mathcal Q}
\newcommand{\act}{\mathcal A}
\newcommand{\distr}[1]{\mathcal D \left( #1 \right)}
\newcommand{\pow}[1]{\mathcal P \left( #1 \right)}
\newcommand{\mdp}{\mathcal M}
\newrobustcmd{\dc}[1]{#1\kl[\dc]{\downarrow}}
\knowledge\dc{math notion}
\newmathcommandPIE\Win{X#1#2\IfEmptyTF{#3}{}{^{(\GetExponent{#3})}}}
\newcommand{\Eve}{\kl[\Eve]{Eve}\xspace}
\knowledge\Eve{notion}
\newcommand{\Adam}{\kl[\Adam]{Adam}\xspace}
\knowledge\Adam{notion}
\newcommand{\fin}{F}
\newcommand{\finite}{\mathrm{fin}}
\newcommand{\Jc}{\mathcal J}
\DeclareMathOperator\supp{\kl[\supp]{supp}}
\knowledge\supp{notion}
\DeclareMathOperator\pre{\kl[\pre]{pre}}
\knowledge\pre{notion}
\DeclareMathOperator\post{\kl[\post]{post}}
\knowledge\post{notion}
\DeclareMathOperator\IsADecomp{\mathtt{\kl[\IsADecomp]{IsDecomp}}}
\knowledge\IsADecomp{notion}
\newcommand{\gadg}[1]{\mathcal S_{ #1 }}
\newcommand{\lef}{\mathrm{left}}
\newcommand{\righ}{\mathrm{right}}
\newcommand{\sr}{\mathrm{init}}
\newcommand{\fr}{\mathrm{fin}}
\DeclareMathOperator\Disc{\kl[\Disc]{\mathtt{Disc}}}
\knowledge\Disc{notion}
\newcommand{\Capa}{\mathscr C}
\newcommand{\conf}[1]{#1}
\newcommand{\fl}[1]{#1}
\newcommand{\bfl}[1]{\mathbf{#1}}
\newcommand{\set}[1]{\left\{ #1 \right\}}
\newmathcommand\SFP{\kl[\SFP]{\mathrm{Pre}^*}}
\knowledge{\SFP}{notion}
\newcommand{\aut}{\mathcal A}
\declaremathcommand\redmdp{\kl[\reductionmdp]{\mathcal M}}
\knowledge{\reductionmdp}{notion}
\declaremathcommand\actions{\mathcal A}
\declaremathcommand\states{\mathcal Q}
\declaremathcommand\transitions{\Delta}
\declaremathcommand\cfeq{\mathrel{\kl[\cfeq]{\approx}}}
\knowledge\cfeq{math notion}
\newcommand\afin{\kl[\afin]{a_{\finp}}}
\knowledge\afin{math notion}
\newcommand\finp{\mathrm{fin}} 
\begin{document}

\title{Controlling a random population}
\titlecomment{{\lsuper*}A conference version was published in FoSSaCS'2020~\cite{ColcombetFijalkowOhlmann20}.
This work was supported by the European Research Council (ERC) under the European Union's Horizon 2020 research and innovation programme (grant agreement No.670624), and by the DeLTA ANR project (ANR-16-CE40-0007).}

\author[T.~Colcombet]{Thomas Colcombet\rsuper{a}}
\address{CNRS, Universit{\'e} de Paris, France}

\author[N.~Fijalkow]{Nathana{\"e}l Fijalkow\rsuper{b}}
\address{CNRS, LaBRI, Universit{\'e} de Bordeaux, France \\
and the Alan Turing Institute of data science, London, United Kingdom}

\author[P.~Ohlmann]{Pierre Ohlmann\rsuper{c}}
\address{Universit{\'e} de Paris, France}

\maketitle             

\begin{abstract}
Bertrand et al.\ introduced a model of parameterised systems, where each agent is represented by a finite state system,
and studied the following control problem: 
for any number of agents, does there exist a controller able to bring all agents to a target state?
They showed that the problem is decidable and \textsc{EXPTIME}-complete in the adversarial setting,
and posed as an open problem the stochastic setting,
where the agent is represented by a Markov decision process.
In this paper, we show that the stochastic control problem is decidable.
Our solution makes significant uses of well quasi orders, of the max-flow min-cut theorem, and of the theory of regular cost functions.
We introduce an intermediate problem of independence interest called the sequential flow problem and study its complexity.
\end{abstract}

\section{Introduction}\label{sec:introduction}


\paragraph*{The control problem for populations of identical agents.}
The model we study was introduced in~\cite{BDGG17} (see also the journal version~\cite{BDGGG19}):
a population of agents are controlled uniformly, meaning that the controller applies the same action to every agent.
The agents are represented by a finite state system, the same for every agent.
The key difficulty is that there is an arbitrary large number of agents:
the control problem is whether for every $n \in \mathbb{N}$, there exists a controller able to bring all $n$ agents synchronously to a target state.

The technical contribution of~\cite{BDGG17,BDGGG19} is to prove that 
in the adversarial setting where an opponent chooses the evolution of the agents, the (adversarial) control problem is EXPTIME-complete.

In this paper, we study the stochastic setting, where each agent evolves independently according to a probabilistic distribution,
\textit{i.e.} the finite state system modelling an agent is a Markov decision process.
The control problem becomes whether for every $n \in \mathbb{N}$, there exists a controller able 
to bring all $n$ agents synchronously to a target state with probability one.

\medskip

Our main technical result is that the stochastic control problem is decidable.
In the next paragraphs we discuss four motivations for studying this problem:
control of biological systems, parameterised verification and control, distributed computing, and automata theory.

\paragraph*{Modelling biological systems.}
The original motivation for studying this model was for controlling population of yeasts (\cite{UMDCFBHB15}).
In this application, the concentration of some molecule is monitored through fluorescence level. 
Controlling the frequency and duration of injections of a sorbitol solution influences the concentration of the target molecule,
triggering different chemical reactions which can be modelled by a finite state system.
The objective is to control the population to reach a predetermined fluorescence state.
As discussed in the conclusions of~\cite{BDGG17,BDGGG19}, the stochastic semantics 
is more satisfactory than the adversarial one for representing the behaviours of chemical reactions,
so our decidability result is a step towards a better understanding of the modelling of biological systems
as populations of arbitrarily many agents represented by finite state systems.

\paragraph*{From parameterised verification to ""parameterised control"".}
Parameterised verification was introduced in~\cite{GS92}: 
the goal is to prove the correctness of a system specification regardless of the number of its components.
We refer to~\cite{AbdullaD16} for a recent survey on the topic.
The control problem we study here, which was introduced in~\cite{BDGG17,BDGGG19}, is a step towards \textit{parameterised control}:
the goal is to control a system composed of many identical components in order to ensure a given property.

\paragraph*{Distributed computing.}
Our model resembles two models introduced for the study of distributed computing.
The first and most widely studied is population protocols, introduced in~\cite{AADFP06}:
the agents are modelled by finite state systems and interact by pairs drawn at random.
The mode of interaction is the key difference with the model we study here: in a time step,
all of our agents perform simultaneously and independently the same action.
This brings us closer to broadcast protocols as studied for instance in~\cite{EFM99}, 
in which one action involves an arbitrary number of agents.
As explained in~\cite{BDGG17,BDGGG19}, our model can be seen as a subclass of (stochastic) broadcast protocols,
but key differences exist in the semantics, making the two bodies of work technically independent.

The focus of the distributed computing community when studying population or broadcast protocols is to construct the most efficient protocols
for a given task, such as (prominently) electing a leader.
A growing literature from the verification community focuses on checking the correctness of a given protocol against a given specification;
we refer to the recent survey~\cite{Esparza16} for an overview.
We concentrate on the control problem, which can then be seen as a first result in the control of distributed systems in a stochastic setting.

\paragraph*{Alternative semantics for probabilistic automata.}
It is very tempting to consider the limit case of infinitely many agents: 
the parameterised control question becomes the value $1$ problem for probabilistic automata, 
which was proved undecidable in~\cite{GO10}, and even in very restricted cases, see for instance~\cite{FGHO14}.
Hence abstracting continuous distributions by a discrete population of arbitrary size 
can be seen as an approximation technique for probabilistic automata.
Using $n$ agents corresponds to using numerical approximation up to $2^{-n}$ with random rounding;
in this sense the control problem considers a converging approximation scheme.
The plague of undecidability results on probabilistic automata (see \textit{e.g.} \cite{Fijalkow17})
is nicely contrasted by our positive result, 
which is one of the few decidability results on probabilistic automata not making structural assumptions on the underlying graph.

\paragraph*{Our results.}
We prove decidability of the stochastic control problem.
The first insight is given by the theory of well quasi orders, 
which motivates the introduction of a new problem called the sequential flow problem.
The first step of our solution is to reduce the stochastic control problem to (many instances of) the sequential flow problem.
The second insight comes from the theory of regular cost functions, 
providing us with a set of tools for addressing the key difficulty of the problem, namely the fact that there are arbitrarily many agents.
Our key technical contribution is to show the computability of the sequential flow problem by reducing it to a boundedness question
expressed with distance and desert automata using the max-flow min-cut theorem.

\paragraph*{Studying infinite-state Markov decision processes.}
The technical contribution of this paper is an algorithm for solving an infinite-state Markov decision process with a finite representation. The purpose of the notion of decisive Markov chains~\cite{AHM07} is to define a unifying property for studying infinite-state Markov chains with ``finite-like'' properties. The Markov decision processes we consider in this paper do not yield decisive Markov chains, which suggests that our results are in some sense orthogonal to this notion.

\paragraph*{Organisation of the paper.}
We define the "stochastic control problem" in Section~\ref{sec:problem}, and the "sequential flow problem" in Section~\ref{sec:sequential_flow},
together with its simple variant.
We construct a reduction from the former to (many instances of) the latter in Section~\ref{sec:reduction}.
Our solution of the "sequential flow problem" (SFP) goes in two steps, first studying the simple variant in Section~\ref{sec:decidability_simple},
and then the general case in Section~\ref{sec:decidability}. 
Section~\ref{sec:complexity} proves the \textsc{PSPACE}-hardness of the "simple sequential flow problem".

\section{The stochastic control problem}\label{sec:problem}



For a finite set $X$, we let $\distr{X}$ denote the set of probabilistic distributions over $X$,
\textit{i.e.} functions $\delta : X \to [0,1]$ such that $\sum_{x \in X} \delta(x) = 1$.

\begin{definition}\label{def:mdp}
\AP A ""Markov decision process"" (""MDP"" for short) consists of
\begin{itemize}
\item a finite set of ""states"" $\st$,
\item a finite set of ""actions"" $\act$,
\item a stochastic ""transition table"" $\rho : \st \times \act \to \distr{\st}$.
\end{itemize}
\end{definition}
\AP The interpretation of the "transition table" is that from the state $p$ under action $a$, 
the probability to transition to $q$ is $\rho(p,a)(q)$.
The ""transition relation"" $\Delta$ is defined by
\[
\Delta = \set{ (p,a,q) \in \st \times \act \times \st : \rho(p,a)(q) > 0 }.
\] 
We also use $\Delta_a$ given by 
$\{(p,q) \in \Q \times \Q : (p,a,q) \in \Delta\}$.

\AP We refer to~\cite{Kucera11} for the usual notions related to "MDPs"; 
it turns out that very little probability theory will be needed in this paper, 
so we restrict ourselves to mentioning only the relevant objects.
In an "MDP" $\mdp$, a strategy is a function $\sigma : \st \to \act$; note that we consider only pure and positional strategies,
as they will be sufficient for our purposes (see Section~\ref{sec:reduction}).
\AP Given a ""source"" $s \in \st$ and a ""target"" $t \in \st$, 
we say that the "strategy" $\sigma$ ""almost surely"" reaches $t$ if 
the probability that a path starting from $s$ and consistent with $\sigma$ eventually leads to $t$ is 1.
As we shall recall in Section~\ref{sec:reduction}, 
whether there exists a strategy ensuring to reach $t$ almost surely from $s$, called the 
""almost sure reachability problem"" for "MDP" can be reduced to solving a non-stochastic two player B{\"u}chi game,
and in particular does not depend upon the exact "probabilities".
In other words, the only relevant information for each $(p,a,q) \in \st \times \act \times \st$ is whether $\rho(p,a)(q) > 0$ or not, equivalently $(p,a,q) \in \Delta$.
Since the same will be true for the stochastic control problem we study in this paper, 
in our examples we do not specify the exact "probabilities", and an edge from $p$ to $q$ labelled $a$ means that 
$\rho(p,a)(q) > 0$.
\AP When considering a target we implicitly assume that it is a sink: it has a single outgoing transition which is a self-loop.

\AP Let us now fix an "MDP" $\mdp$ and consider a population of $n$ ""tokens"" (we use "tokens" to represent the agents).
Each token evolves in an independent copy of the "MDP" $\mdp$.
The controller acts through a ""strategy"" $\sigma : \st^n \to \act$, meaning that given the state each of the $n$ tokens is in,
the controller chooses \textit{one} action to be performed by all tokens independently.
Formally, we are considering the product "MDP" $\mdp^n$ whose set of states is $\st^n$, set of actions is $\act$, and 
"transition table" is $\rho^n(u,a)(v) = \prod_{i = 1}^n \rho(u_i,a)(v_i)$, 
where $u,v \in \st^n$ and $u_i,v_i$ are the $i$\textsuperscript{th} components of~$u$ and $v$.

Let $s,t \in \st$ be the "source" and "target" states, we write $n \cdot s$ and $n \cdot t$ for the constant $n$-tuples where all components are $s$ and $t$.
This ``additive notation'' will be more convenient later on than the ``product notation'' suggested by $\mdp^n$.
For a fixed value of $n$, whether there exists a strategy ensuring to reach $n \cdot t$ almost surely from $n \cdot s$ can be reduced to solving a two player B{\"u}chi game by considering the "MDP" $\mdp^n$ and reducing it to a B{\"u}chi game as explained above.
The stochastic control problem asks whether this is true for arbitrary values of $n$:

\begin{problem}[""Stochastic control problem""]\label{problem:main}\hfill

\textbf{Input}: an "MDP" $\mdp$, a "source" state $s \in \st$, and a "target" state $t \in \st$.

\AP 
\textbf{Output}: Yes if for all $n \in \mathbb{N}$, there exists a "strategy" ensuring to reach $n \cdot t$ almost surely from~$n \cdot s$, and No otherwise.
\end{problem}

Our main result is the following.

\begin{theorem}\label{thm:main}
The "stochastic control problem" is decidable.
\end{theorem}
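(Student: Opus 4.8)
\emph{Stage 1: reducing to the sequential flow problem.} The plan has two stages: reduce the stochastic control problem to finitely many instances of the sequential flow problem (SFP, Section~\ref{sec:sequential_flow}), and then show the SFP decidable (Sections~\ref{sec:decidability_simple} and~\ref{sec:decidability}). For the first stage, fix the MDP $\mdp$. For a fixed $n$, deciding whether the controller can reach $n \cdot t$ almost surely from $n \cdot s$ is the almost sure reachability problem in the product MDP $\mdp^n$, which reduces to a non-stochastic two-player B\"uchi game; the only real issue is uniformity in $n$. The key observation is that, since the controller applies a single action to all tokens and only the supports of the distributions matter, the relevant state of the population is its \emph{support} (the set of occupied states), together with a ``slack parameter'' recording how many tokens remain available. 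The structural fact where well quasi orders enter is that the family of population behaviours relevant to winning is governed by a downward-closed set admitting a finite decomposition into ideals; consequently the controller wins from $n \cdot s$ for \emph{all} $n$ if and only if, along one of finitely many ``patterns'' of action sequences extracted from $\mdp$ (packaged as the gadgets of Section~\ref{sec:reduction}), one can route an \emph{unbounded} amount of population mass from $\{s\}$ to $\{t\}$ without ever losing a token. This ``unbounded routability'' statement --- about flows along a structured word of graphs subject to conservation laws --- is exactly what the SFP formalises, so the reduction emits one SFP instance per pattern and the stochastic control problem becomes a finite disjunction of SFP answers.

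\emph{Stage 2: deciding the SFP.} Here the theory of regular cost functions provides the tools. In the SFP the parameter $n$ plays the role of an unbounded capacity, and the question is whether flows of arbitrarily large value exist. The plan is: (i) apply the max-flow min-cut theorem to rewrite ``flows of arbitrarily large value exist'' as ``every cut of the word of graphs has arbitrarily large capacity'', a boundedness statement; (ii) build a cost automaton --- a distance automaton, together with a desert automaton to deal with the minimisation over cuts --- reading the capacity word, whose value function is, up to the usual boundedness-preserving equivalence, the supremum over sequential cuts of their capacities; (iii) invoke decidability of the boundedness problem for distance and desert automata, i.e.\ the decidability results of the theory of regular cost functions, to decide whether this function is bounded. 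As in the paper, I would first carry this out for the simple sequential flow problem (Section~\ref{sec:decidability_simple}), where the reduction to a single distance automaton is most transparent, and then bootstrap to the general SFP (Section~\ref{sec:decidability}).

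\emph{Main obstacle.} The technical heart is step (ii): max-flow min-cut yields a clean primal--dual duality for a \emph{single} graph, but the SFP asks for flows consistent along an entire word of graphs tied together by conservation laws, so the dual object is a \emph{sequential} cut and ``min-cut'' acquires a temporal, quantifier-alternation flavour. Designing a cost automaton whose bounded semantics is exactly the supremum over such sequential cuts --- and checking that the resulting function genuinely is a regular cost function, so the decidability machinery applies --- is the delicate point. A secondary difficulty, on the Stage~1 side, is getting the well quasi order argument to deliver a genuine cut-off, i.e.\ ruling out a subtle trade-off between more tokens giving more routing freedom and more tokens making the synchronisation target larger; this is precisely what forces the ``unboundedness'' reformulation rather than a naive monotonicity argument.
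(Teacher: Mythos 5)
Your Stage~2 sketch is broadly the paper's route: reduce boundedness of the max number of synchronisable tokens to boundedness of a cost automaton via the max-flow min-cut theorem, and invoke the decidability of boundedness for distance and desert automata. (One inaccuracy there: the minimisation over sequential cuts is handled by the \emph{distance} automaton itself, which guesses a cut layer by layer through a powerset construction; desert automata enter only for the general "sequential flow problem", where one decomposes the source ideal into its $\omega$-components plus a finitary part, takes a product of distance automata for the min over components, converts it to a desert automaton by the duality theorem of "regular cost functions", and only then takes the max over guessed decompositions.)

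The genuine gap is in Stage~1. Your claimed equivalence --- the controller wins for all $n$ iff, along one of finitely many ``patterns'' fixed in advance from $\mdp$, an unbounded amount of mass can be routed from $s$ to $t$ ``without ever losing a token'' --- is not correct, and in particular it cannot distinguish almost-sure reachability from reachability with positive probability: a single flow word reaching $\fin$ only witnesses that Eve wins when the random outcomes happen to cooperate. The paper's mechanism is precisely what is missing: almost-sure reachability is first recast as a two-player B\"uchi game in which \Adam may "interrupt" and re-resolve the chosen flow, this is lifted to configurations in $\Nats^{\st}$, and one stratifies by the number of interrupts, obtaining winning regions $\Win^i$ which are downward closed and non-increasing, hence stabilise by Dickson's lemma at some $\Win^{i_0}=\Win^{\infty}$. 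The key lemma is then $\Win^{i}=\SFP(\Flows^i,\fin)$, where $\Flows^i$ is the set of flows all of whose Adam-deviations land in $\Win^{i-1}$; so the SFP instances are \emph{not} a finite disjunction read off from $\mdp$, but are generated iteratively, each one's flow set depending on the ideal decomposition of the previously computed region (computed via the Valk--Jantzen-type lemma), with termination detected when $\Win^i=\Win^{i+1}$. Relatedly, your abstraction ``support plus one slack parameter'' is too coarse: Example~3 of the paper needs to record a per-state finite bound (at most one token in $q_1$) alongside unbounded states, which is exactly what ideals of $(\Nats\cup\{\omega\})^{\st}$ provide; a naive support abstraction would not be sound there. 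Without the interrupt hierarchy and the recursive characterisation of $\Flows^i$, the reduction as you state it does not establish decidability.
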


The fact that the problem is co-recursively enumerable is easy to see: if the answer is ``no'',
there exists $n \in \mathbb{N}$ such that there exist no strategy ensuring to reach $n \cdot t$ almost surely from $n \cdot s$.
Enumerating the values of $n$ and solving the "almost sure reachability problem" for $\mdp^n$ eventually finds this out.
However, it is not clear whether one can exhibit an upper bound on such a witness $n$, which would yield a simple (yet inefficient!) algorithm.
As a corollary of our analysis we can indeed derive such an upper bound, but it is non elementary in the size of the "MDP".

In the remainder of this section we present a few interesting examples.

\begin{exa}\label{ex:one}
Let us consider the "MDP" represented in Figure~\ref{fig:ex_one}.
We show that for this "MDP", for any $n \in \mathbb{N}$, the controller has a strategy almost surely reaching $n \cdot t$ from $n \cdot s$.
Starting with $n$ "tokens" on $s$, we iterate the following strategy:
\begin{itemize}
\item Repeatedly play "action" $a$ until all "tokens" are in $q$;
\item Play "action" $b$.
\end{itemize}
The first step is eventually successful with probability one, since at each iteration there is a positive probability that the number of "tokens" in state $q$ increases.
In the second step, with non zero probability at least one "token" goes to $t$, while the rest goes back to $s$.
It follows that each iteration of this strategy increases with non zero probability the number of "tokens" in $t$. 
Hence, all "tokens" are eventually transferred to $n \cdot t$ almost surely.

\begin{figure}[ht]
    \centering
    \def\svgwidth{\columnwidth}
    \resizebox{0.45\textwidth}{!}{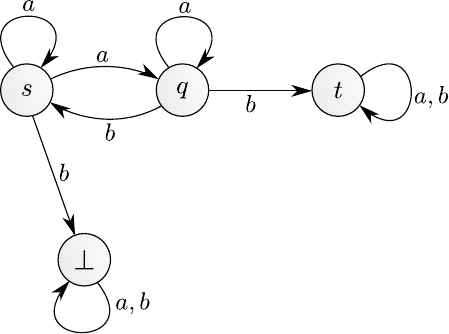}
\caption{The controller can almost surely reach $n \cdot t$ from $n \cdot s$, for any $n \in \mathbb{N}$.}
\label{fig:ex_one}
\end{figure}
\end{exa}

\begin{exa}\label{ex:two}

We now consider the "MDP" represented in Figure~\ref{fig:ex_two}. 
By convention, if from a state some action does not have any outgoing transition (for instance the action $u$ from $s$),
then it goes to the "sink" state $\bot$.

We show that there exists a controller ensuring to transfer seven "tokens" from $s$ to $t$,
but that the same does not hold for eight "tokens". 
For the first assertion, we present the following strategy:
\begin{itemize}
\item Play $a$. One of the "states" $q_1^{i_1}$ for $i_1 \in \{u,d\}$ receives at least $4$ "tokens".
\item Play $i_1 \in \{u,d\}$. At least 4 "tokens" go to $t$ while at most 3 go to $q_1$.
\item Play $a$. One of the "states" $q_2^{i_2}$ for $i_2 \in \{u,d\}$ receives at least $2$ "tokens".
\item Play $i_2 \in \{u,d\}$. At least 2 "tokens" go to $t$ while at most 1 "token" goes to $q_2$.
\item Play $a$. The token (if any) goes to $q_3^i$ for $i_3 \in \{u,d\}$.
\item Play $i_3 \in \{u,d\}$. The remaining token (if any) goes to $t$.
\end{itemize}
Now assume that there are 8 "tokens" or more on $s$. 
The only choices for a strategy are to play $u$ or $d$ on the second, fourth, and sixth moves.
First, with non zero probability at least 4 "tokens" are in each of $q_1^i$ for $i \in \{u,d\}$. 
Then, whatever the choice of "action" $i \in \{u,d\}$, there are at least 4 "tokens" in $q_1$ after the next step. 
Proceeding likewise, there are at least 2 "tokens" in $q_2$ with non zero probability two steps later. 
Then again two steps later, at least 1 "token" falls in the "sink" with non zero probability.

\begin{figure}[ht]
    \centering
    \def\svgwidth{\columnwidth}
    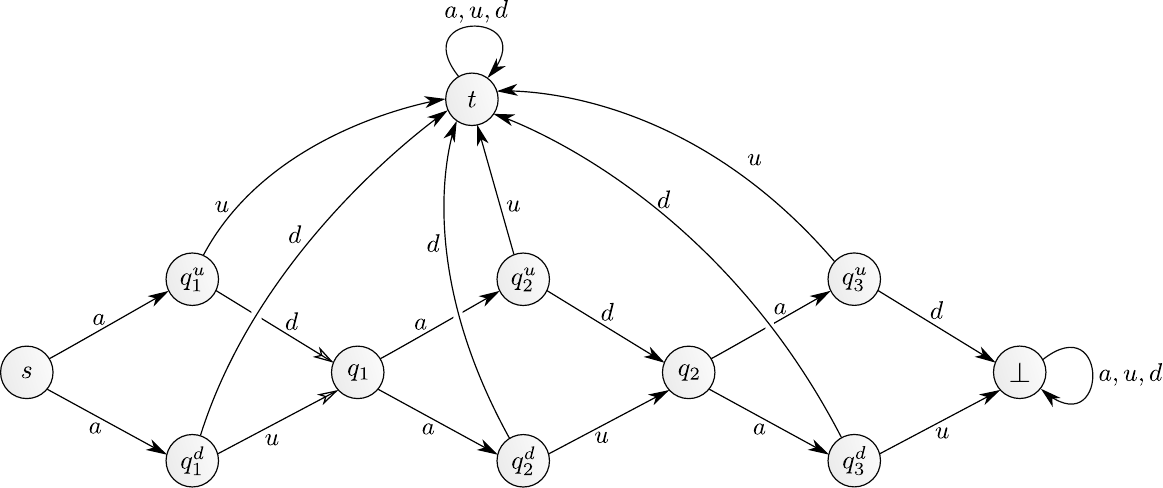
\caption{The controller can synchronise up to $7$ tokens on the target state $t$ almost surely, but not more.}
\label{fig:ex_two}
\end{figure}
Generalising this example shows that if the answer to the "stochastic control problem" is ``no'', 
the smallest number of "tokens" $n$ for which there exist no almost surely strategy for reaching $n \cdot t$ from $n \cdot s$
may be exponential in $|\st|$.
This can be further extended to show a doubly exponential in $\st$ lower bound, as done in~\cite{BDGG17,BDGGG19};
the example produced there holds for both the adversarial and the stochastic setting. 
Interestingly, for the adversarial setting this doubly exponential lower bound is tight.
Our proof for the stochastic setting yields a non-elementary upper bound, leaving a very large gap.
\end{exa}

\begin{exa}\label{ex:three}
We consider the "MDP" represented in Figure~\ref{fig:ex_three}. 
For any $n \in \mathbb{N}$, there exists a strategy almost surely reaching $n \cdot t$ from $n \cdot s$.
We iterate the following strategy:
\begin{itemize}
\item Repeatedly play "action" $a$ until exactly $1$ "token" is in $q_1$.
\item Play "action" $b$. The "token" goes to $q_i$ for some $i \in \{l,r\}$.
\item Play "action" $i \in \{l,r\}$, which moves the "token" to $t$.
\end{itemize}
Note that the first step may take a very long time (the expectation of the number of $a$s to be played until this happens is exponential in the number of "tokens"), but it is eventually successful with probability one.
This very slow strategy is necessary: if $q_1$ contains at least two "tokens", then action $b$ should not be played: 
with non zero probability, at least one "token" ends up in each of $q_l,q_r$, so at the next step some "token" ends up in $\bot$. 
It follows that any strategy almost surely reaching $n \cdot t$ can play $b$ only when there is at most one token in $q_1$. 
This is a key example for understanding the difficulty of the stochastic control problem.

\begin{figure}[ht]
    \centering
    \def\svgwidth{\columnwidth}
    \resizebox{0.45\textwidth}{!}{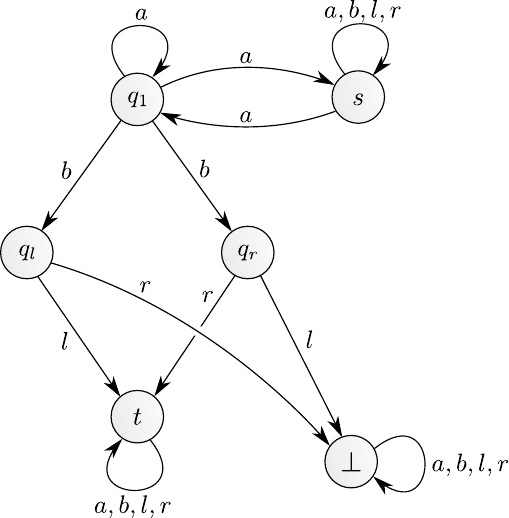}
\caption{The controller can synchronise any number of tokens almost surely on the target state $t$, but since anytime at most one token can be in $\set{q_l, q_r}$, each token must be sent alone from $q_1$ using $b$.}
\label{fig:ex_three}
\end{figure}
\end{exa}

\section{The sequential flow problem}\label{sec:sequential_flow}


We introduce first a simple variant of the sequential flow problem.
Our motivations are twofold: first it will be a technically convenient step for our solution of the general problem,
second we provide a complexity lower bound already for this simpler problem.

We let $\st$ be a finite set of states.

\paragraph*{Capacities.}  \AP
A ""capacity"" is an element $\fl a \in (\Nats \cup \set{\omega})^{\st \times \st}$,
intuitively it defines for each pair of "states" $(p,q)$ the maximal value that can be transported from $p$ to $q$ in one step.
Setting $\fl{a}(p,q) = 0$ means that there is no edge from $p$ to $q$.
Using $\omega$ as a value is interpreted as having an edge with unbounded "capacity".

\paragraph*{Flows.}  \AP
Let us call ""flow"" an element $\fl{f} \in \Nats^{\states \times \states}$.
We call ""configuration"" an element of $\Nats^\st$.
A "flow" $\fl{f}$ induces two "configurations" $\intro*\pre(\fl f)$ and $\intro*\post(\fl f)$ defined by 
\[
\pre(\fl f)(p) = \sum_{q \in \st} \fl f(p,q) \qquad \text{and} \qquad \post(\fl f)(q) = \sum_{p \in \st} \fl f(p,q).
\]
Given $\conf{c}, \conf{c'}$ two "configurations" and $\fl{f}$ a "flow", 
we say that $\conf{c}$ ""goes to"" $\conf{c'}$ using $\fl{f}$ and write $\conf c \intro*\flowarrow {\fl f} \conf{c'}$,
if $\conf{c} = \pre(\fl f)$ and $\conf{c'} = \post(f)$. 

For a flow $\fl f$, we write~$\intro*\supp(f) = \set{(p,q) \in \st^2 : \fl f(p,q) > 0}$. 

\paragraph*{Flow "ideals".}  \AP
For a "capacity" $\fl a \in (\Nats \cup \set{\omega})^{\st \times \st}$ we write 
\[
\reintro*\dc{\fl a} = \set{ \fl f \in \Nats^{\st \times \st} : \conf f \leq \fl a},
\]
and call $\dc{\fl a}$ the (flow) "ideal" induced by $\fl a$.
This terminology will be justified later when using the framework of "well quasi orders".

\paragraph*{Sequential "flows".}  \AP
We introduce now an extension of the classical (max-)flow problem with a sequential aspect:
instead of a single "capacity" $\fl a$ we have a finite set of "capacities" $\Capa$,
and we go from a "configuration" $\conf{c}$ to another "configuration" $\conf{c'}$ by applying a sequence of "flows"
$\fl{f_1},\dots,\fl{f_\ell}$ with each flow belonging to some $\dc{\fl{a}}$ with $\fl a \in \Capa$.

\AP Formally, a ""flow word"" is $\bfl{f} = \fl{f_1} \dots \fl{f_{\ell}}$ where each $\fl{f_i}$ is a "flow".
We write $\conf c \intro*\flowpath{\bfl f} \conf{c'}$ if there exists a sequence of "configurations" 
$\conf{c_0}, \conf{c_1}, \dots, \conf{c_\ell}$ such that 
$\conf c = \conf{c_0}$, $\conf c' = \conf{c_\ell}$, and $\conf{c_{i-1}} \flowarrow{f_i} \conf{c_i}$ for all $i \in [1,\ell]$.
In this case, we say that $\conf{c}$ "goes to" $\conf{c'}$ using the "flow word" $\bfl{f}$.

\AP A ""capacity word"" is a finite sequence of "capacities".
For a "flow word" $\bfl{f} = \fl{f_1} \dots \fl{f_{\ell}}$ and a "capacity word" $\fl{w} = \fl{a_1} \dots \fl{a_{\ell}}$ 
we write $\bfl f \le \fl w$ to mean that $\fl f_i \le \fl a_i$ for each position $i$. 

\paragraph*{Configuration ideals.}  \AP
Given a "state"~$q$, we write $\conf{q} \in \Nats^{\states}$ for the vector which has value~$1$ on the~$q$ component and~$0$ elsewhere. 
We use additive notations: for instance, $2 \cdot \conf{q_1} + \conf{q_2}$ has value $2$ in the $q_1$ component, $1$ in the $q_2$ component, and $0$ elsewhere.
For a vector $\fl x \in (\Nats \cup \set{\omega})^\st$ we write
\[
\dc{\fl x} = \set{ \fl c \in \Nats^{\st} : \conf c \leq \fl x},
\]
and call $\dc{\fl x}$ the (configuration) "ideal" induced by $\fl x$.

\paragraph*{The "simple sequential flow problem".}  \AP
Let us first introduce the simple variant of the "sequential flow problem" and give an example.

\begin{problem}[""Simple sequential flow problem""]\hfill

\textbf{Input}: $\states$ a finite set of "states", $\Capa$ a finite set of "capacities", $s$ a source state, and $\fin$ a target ideal of configurations.

\AP 
\textbf{Output}: Yes if for all $n \in \Nats$, there exists a "capacity word" $w \in \Capa^*$, a "flow word" $\bfl f \leq w$ and a final capacity $c \in \fin$ such that $n \cdot s \flowpath{\bfl f} c$, and No otherwise.
\phantomintro\SSFP
\end{problem}

When considering an instance of the "simple sequential flow problem" 
we let $\decFlows = \bigcup_{a \in \Capa} \dc{a}$ denote the downward closed set of flows,
$I = \dc{(\omega \cdot s)}$, and $\fin = \dc{(\omega \cdot t)}$.
We call $I$ the source "ideal" and $\fin$ the target "ideal".

In the instance of the "simple sequential flow problem" represented in Figure~\ref{fig:flow1}, we ask the following question: 
can $\fin = \dc{(\omega \cdot q_4)}$ be reached from any "configuration" of $I = \dc{(\omega \cdot \conf{q_1})}$
using the capacities $\Capa = \set{a,b,c}$? 
The answer is yes: for every $n$, there exists a "flow word" $\bfl f \le (\fl a \fl c^{n-1} \fl b)^n$ such that
$n \cdot \conf{q_1} \flowpath{\bfl f} n \cdot \conf{q_4} \in \fin$.
The "flow word" $\bfl f$ is represented in Figure~\ref{fig:flow2}.

\begin{figure}[!ht]
    \centering
    \def\svgwidth{\columnwidth}
    \resizebox{0.8\textwidth}{!}{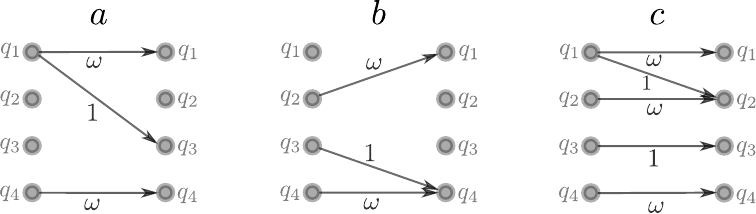}
    \caption{An instance of the "simple sequential flow problem". 
    Can $\fin = \dc{(\omega \cdot q_4)}$ be reached from any "configuration" of $I = \dc{(\omega \cdot \conf{q_1})}$
    using the three "capacities" $\fl a$, $\fl b$, and $\fl c$?
    }
    \label{fig:flow1}
\end{figure}

\begin{figure}[!ht]
    \centering
    \def\svgwidth{\columnwidth}
    \resizebox{\textwidth}{!}{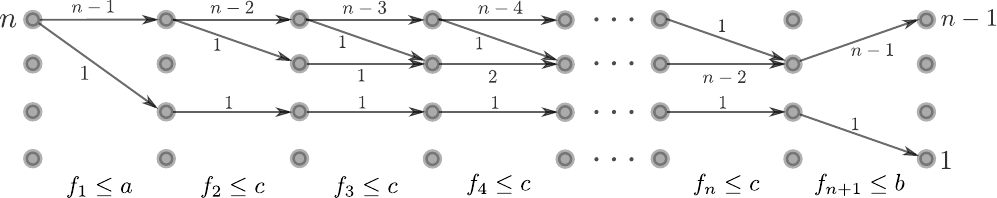}
    \caption{A "flow word" $\bfl f = \fl{f_1} \fl{f_2} \dots \fl{f_{n+1}} \leq \fl{a} \fl{c}^{n-1} \fl b$ 
    such that $n \cdot \conf{q_1}$ "goes to" $(n-1) \cdot \conf{q_1} + \conf{q_4}$ using $\bfl f$. 
    This construction can be iterated to construct $\bfl f \leq \fl (\fl{a} \fl{c}^{n-1} \fl b)^n$ such that 
    $n \cdot \conf{q_1}$ "goes to" $n \cdot \conf{q_4}$ using~$\bfl f$.}
    \label{fig:flow2}
\end{figure}

The general "sequential flow problem" extends its simple variant by considering more general sets of "configurations";
in the "simple sequential flow problem" we ask whether all "configurations" of $I = \dc{(\omega \cdot s)}$ 
go to some "configuration" in $\fin = \dc{(\omega \cdot t)}$, in the general "sequential flow problem" 
we consider an arbitrary "downward closed" set $F$ and ask to compute the ("downward closed") set of all "configurations" reaching $F$.
Towards defining the "sequential flow problem" let us recall some classical terminology about "well quasi orders", which allow us to manipulate general "downward closed" sets (\cite{HG52,K72}, see~\cite{S17} for an exposition of recent results).

\paragraph*{Well quasi orders.}  \AP
Let $(E,\leq)$ be a quasi ordered set (\textit{i.e.} $\leq$ is reflexive and transitive), 
it is a ""well quasi ordered set"" ("WQO") if any infinite sequence contains a non-decreasing pair. 
A set $S$ is ""downward closed"" if for any $x \in S$, if $y \leq x$ then $y \in S$.
An ""ideal"" is a non-empty "downward closed" set $I \subseteq E$ such that for all $x,y \in I$, 
there exists some $z \in I$ satisfying both $x \leq z$ and $y \leq z$. 

\AP We equip the set of "configurations" $\Nats^{\states}$ and the set of "flows" $\Nats^{\states \times \states}$ 
with the quasi orders $\leq$ defined component wise;
thanks to Dickson's Lemma~\cite{D13} they are both "WQOs".

\begin{lemma}\label{lem:WQOstuff}[Folklore, see~\cite{S17}]
Let $(E,\leq)$ be a "WQO".
\begin{itemize}
\itemAP Any infinite sequence of non-increasing "downward closed" sets is eventually constant.
\itemAP A subset is "downward closed" if and only if it is a finite union of incomparable "ideals".
We call it its ""decomposition into ideals"" (or simply, its \reintro{decomposition}), which is unique (up to  permutation).
\itemAP An "ideal" is included in a "downward closed set" if and only if it is included in one of the "ideals" of its "decomposition".
\end{itemize}

Let $X$ be a finite set, we consider the "WQO" $(\Nats^\X, \le)$ with $\le$ defined component wise.
A subset of~$\Nats^\X$ is an "ideal" if and only if it is of the form\phantomintro{\dc}
  \[
  \reintro*\dc{\conf a} = \{ \conf c \in \Nats^\X \mid \conf c \leq \conf a \},
  \]
for some $\conf a \in (\Nats \cup \set{\omega})^X$.
\end{lemma}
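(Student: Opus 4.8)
The plan is to prove the four items in order, using only elementary order theory together with the defining property of a well quasi order — that no infinite sequence of elements avoids a non-decreasing pair, which in turn forbids infinite strictly descending chains and infinite antichains. For the first item I would argue by contradiction: if a non-increasing sequence $S_0 \supseteq S_1 \supseteq \cdots$ of downward closed sets is not eventually constant, there are infinitely many indices $n_0 < n_1 < \cdots$ with $S_{n_i} \supsetneq S_{n_i+1}$; picking $x_i \in S_{n_i} \setminus S_{n_i+1}$ yields an infinite sequence with no non-decreasing pair, since for $i < j$ an inequality $x_i \le x_j$ would put $x_i \le x_j \in S_{n_j} \subseteq S_{n_i+1}$, hence $x_i \in S_{n_i+1}$ by downward closure, a contradiction. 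This contradicts the WQO assumption, and in particular shows that the poset of downward closed sets ordered by reverse inclusion is well-founded, which licenses the well-founded induction used below.

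For the second item, the easy direction is that a finite union of ideals is downward closed. For the converse I would show, by well-founded induction on $S$ along the order just obtained, that every downward closed set $S$ is a \emph{finite} union of ideals. If $S$ is empty it is the empty union; if $S$ is non-empty and directed it is itself an ideal; otherwise fix $a,b \in S$ having no common upper bound in $S$ and observe that $S = \bigl( S \setminus \set{ z \mid a \le z } \bigr) \cup \bigl( S \setminus \set{ z \mid b \le z } \bigr)$. Each of the two sets on the right is downward closed — removing an upward closed set from a downward closed set leaves it downward closed — and is a proper subset of $S$ (it omits $a$, respectively $b$), so the induction hypothesis applies to both, and $S$ is a finite union of ideals. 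I expect this decomposition step, namely correctly setting up the splitting and invoking the well-founded induction, to be the only part that requires real care; everything else is bookkeeping.

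The third item, the passage to pairwise \emph{incomparable} ideals, and uniqueness all follow from the primeness of ideals: if an ideal $I$ is contained in a finite union $I_1 \cup \cdots \cup I_k$ of ideals, then $I \subseteq I_j$ for some $j$. Indeed, otherwise choose $x_j \in I \setminus I_j$ for each $j$, use directedness of $I$ to find $z \in I$ above every $x_j$, and derive a contradiction from $z \in I_{j_0}$ for whichever $j_0$ contains $z$, since downward closure of $I_{j_0}$ would then force $x_{j_0} \in I_{j_0}$. Applying this with $I_1,\dots,I_k$ the ideals of a decomposition of a downward closed set gives the third item; deleting redundant ideals from any finite union of ideals produces a union of pairwise incomparable ideals with the same union; and if $\bigcup_a I_a = \bigcup_b J_b$ with both families pairwise incomparable, primeness furnishes maps $\phi,\psi$ with $I_a \subseteq J_{\phi(a)} \subseteq I_{\psi(\phi(a))}$, which by incomparability forces $\psi(\phi(a)) = a$ and $I_a = J_{\phi(a)}$, and symmetrically, so the two decompositions agree up to permutation.

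Finally, for $(\Nats^X,\le)$: the ``if'' direction holds because $\dc{\conf a}$ is non-empty (it contains $\mathbf 0$), downward closed, and directed, as the component-wise maximum of two vectors below $\conf a$ is again below $\conf a$. For the ``only if'' direction, given an ideal $I \subseteq \Nats^X$ I would set $\conf a(x) = \sup \set{ \conf c(x) \mid \conf c \in I }$ in $\Nats \cup \set{\omega}$ for each $x \in X$, and check $I = \dc{\conf a}$: the inclusion $I \subseteq \dc{\conf a}$ is immediate, and conversely, given $\conf d \le \conf a$, finiteness of $X$ lets me pick for every coordinate $x$ a vector $\conf c_x \in I$ with $\conf c_x(x) \ge \conf d(x)$, after which directedness of $I$, applied finitely many times, produces $\conf c^{\ast} \in I$ dominating all the $\conf c_x$, so that $\conf d \le \conf c^{\ast} \in I$ and hence $\conf d \in I$ by downward closure. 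All of the above is folklore and could alternatively be imported from~\cite{S17}.
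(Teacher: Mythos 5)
Your proposal is correct: the paper itself offers no proof of this lemma, simply citing it as folklore from~\cite{S17}, and your arguments are exactly the standard ones behind that citation (the descending-chain condition on downward closed sets, well-founded induction with the split along the upward closures of two elements lacking a common upper bound to get finite ideal decompositions, primeness of ideals for the inclusion and uniqueness claims, and the componentwise supremum together with directedness over the finitely many coordinates for the characterisation of ideals of $\Nats^X$). All steps check out, so this constitutes a complete self-contained proof of the statement the paper delegates to the literature.
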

Thanks to Lemma~\ref{lem:WQOstuff}, we represent "downward closed sets" of "configurations" and "flows" using their decomposition into ideals.

The most general definition of the "sequential flow problem" is the following.
Let $\states$ be a finite set of states, 
$\Capa$ a finite set of "capacities" inducing $\intro*\decFlows \subseteq \Nats^{\states \times \states}$ a "downward closed" set of "flows",
$\fin \subseteq \Nats^{\states}$ a downward closed set of configurations,
and $\fin \subseteq \Nats^{\states}$ a "downward closed" set of "configurations",
let us define the following "downward closed set" $\SFP$:
\begin{align*}
   \SFP(\decFlows,\fin) = \{\conf c\in\Nats^{\states}\mid \conf{c} \flowpath{\bfl f}\conf{c'}\in\fin,\ \bfl f\in\decFlows^*\}\ ,
\end{align*} 
\textit{i.e.} the "configurations" from which one may reach $\fin$ using only "flows" from $\decFlows$.
The objective is to construct $\SFP(\decFlows,\fin)$.

The following classical result of~\cite{VJ85} allows us to further reduce our problem.
Informally, it says that the task of constructing a downward closed set can be reduced to the task of 
determining whether an "ideal" is included in a "downward closed" set.

\begin{lemma}\label{lem:vjgl}
Let $X$ be a downward closed set.
Assume there exists an algorithm solving the following problem:

\textbf{Input}: $I$ ideal

\textbf{Output}: Yes if $I \subseteq X$, and No otherwise.

\noindent Then there is an algorithm constructing the decomposition of $X$ into ideals.
\end{lemma}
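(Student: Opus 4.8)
The plan is a ``guess and check'' (dovetailing) procedure carried out in the ambient "WQO", which in all our uses is of the form $(\Nats^Y,\le)$ for a finite set $Y$ as in \ref{lem:WQOstuff}; write $X\subseteq\Nats^Y$ for the given "downward closed" set. By \ref{lem:WQOstuff}, $X$ has a finite, unique "decomposition into ideals", and every "ideal" of $\Nats^Y$ is of the effective form $\dc{\conf a}$ for some $\conf a\in(\Nats\cup\set\omega)^Y$; hence finite sets of "ideals" can be enumerated effectively. First I would enumerate these \emph{candidate decompositions} $D=\set{I_1,\dots,I_m}$ one by one, and for each run two semi-tests, halting and returning $D$ as soon as some $D$ passes both. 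Test~1 checks $\bigcup_j I_j\subseteq X$, which holds iff $I_j\subseteq X$ for every $j$ and is therefore decided by $m$ calls to the assumed oracle. Test~2 checks the converse inclusion $X\subseteq\bigcup_j I_j$.

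The crux is Test~2. The complement $U=\Nats^Y\setminus\bigcup_j I_j$ is upward closed, so $U={\uparrow}B$ for a finite antichain $B$; crucially $B$ is \emph{computable} from $D$, since the complement of a principal "ideal" $\dc{\conf a}$ is the effectively based upward closed set $\bigcup_{y\,:\,\conf a(y)\ne\omega}{\uparrow}\bigl((\conf a(y){+}1)\,\mathbf e_y\bigr)$, and a finite intersection of effectively based upward closed subsets of $\Nats^Y$ is again effectively based (componentwise maxima, followed by minimalisation). Now $X\subseteq\bigcup_j I_j$ is equivalent to $X\cap U=\varnothing$: if some $\conf c\in X\cap U$, then picking $b\in B$ with $b\le\conf c$ and using that $X$ is "downward closed" yields $b\in X\cap U$. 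Since $X$ is "downward closed", $b\in X$ is in turn equivalent to $\dc{b}\subseteq X$, with $\dc{b}$ a (principal) "ideal"; so Test~2 reduces to checking, for each $b\in B$, that $\dc{b}\not\subseteq X$, that is, $\lvert B\rvert$ further oracle calls.

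Finally I would argue correctness and termination. If the procedure halts on $D$, Tests~1--2 certify $\bigcup_j I_j=X$; after discarding any "ideal" contained in the union of the others (decidable by \ref{lem:WQOstuff}) one recovers the genuine, incomparable "decomposition" of $X$. For termination, the true "decomposition" $D^\star=\set{I_1^\star,\dots,I_r^\star}$ passes both tests --- Test~1 because $\bigcup_j I_j^\star=X$, and Test~2 because the complement of $X$ is disjoint from $X$, so every basis element lies outside $X$ --- hence the enumeration reaches $D^\star$ (or an equivalent candidate) after finitely many steps. I expect the main obstacle, and the only place with real content, to be Test~2: one must observe that certifying completeness of an over-approximating finite union of "ideals" can be phrased as finitely many membership tests against $X$, which become oracle calls because membership of a single element reduces to an "ideal" inclusion; this rests on the effectiveness of complementation for finite unions of "ideals" in $\Nats^Y$.
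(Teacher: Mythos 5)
Your proof is correct. Note that the paper itself gives no proof of this lemma: it is quoted as a classical result of Valk and Jantzen~\cite{VJ85}, so there is no in-paper argument to compare against. Your guess-and-check procedure is essentially the standard proof of that theorem (in dual form): enumerate candidate finite unions of ideals, verify one inclusion by direct oracle calls, and verify the other by effectively complementing the candidate union into an upward closed set with a computable finite basis $B$ and reducing $X\cap{\uparrow}B=\varnothing$ to the oracle queries $\dc{b}\subseteq X$ for $b\in B$ --- the effective complementation of principal ideals and of their finite unions in $\Nats^Y$ is exactly the ingredient that makes Test~2 work, and your computation of $B$ (coordinatewise $a(y){+}1$ generators, then componentwise maxima and minimalisation) is right. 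Two minor remarks: the lemma as stated speaks of an abstract downward closed set, and your argument is specific to the WQO $(\Nats^Y,\le)$ with its explicit ideal representation; this restriction is in fact necessary (the statement needs such effectiveness assumptions) and matches the only way the lemma is used in the paper, namely for $\SFP(\decFlows,\fin)\subseteq\Nats^{\states}$. Also, the final clean-up to an antichain of ideals is indeed decidable, since inclusion of ideals $\dc{a}\subseteq\dc{a'}$ amounts to $a\le a'$ componentwise and irreducibility reduces inclusion in a union to inclusion in a member, as in Lemma~\ref{lem:WQOstuff}.
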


Noting that $\SFP(\decFlows, \fin_1 \cup \fin_2) = \SFP(\decFlows, \fin_1) \cup \SFP(\decFlows, \fin_2)$,
we can without loss of generality assume that $\fin$ is a single ideal.
We therefore define the "sequential flow problem" as follows.

\begin{problem}[""Sequential flow problem""]\hfill

\AP
\textbf{Input}: $\states$ a finite set of "states", $\Capa$ a finite set of "capacities", $a$ a source capacity, and $b$ a target capacity.
We let $\decFlows = \bigcup_{c \in \Capa} \dc{c}$ denote the downward closed set of flows induced by $\Capa$,
$I = \dc{a}$ the source "ideal", and $\fin = \dc{b}$ the target "ideal".

\textbf{Output}: Yes if $I \subseteq \SFP(\decFlows,\fin)$, meaning whether $F$ can be reached from all "configurations" of $I$ using only "flows" from $\decFlows$, and No otherwise.
\end{problem} 

As argued above, an algorithm solving the sequential flow problem
actually yields an algorithm constructing the ideal decomposition of $\SFP(\decFlows,\fin)$.

\section{Reduction from the stochastic control problem \texorpdfstring{\\}{}to the sequential flow problem}\label{sec:reduction}


\AP Let us consider an "MDP" $\intro*\redmdp$ and a "target" $t \in \st$.
We first recall a folklore result reducing the almost sure reachability question for "MDPs" to solving a two player B{\"u}chi game 
(we refer to~\cite{GTW2002} for the definitions and notations of B{\"u}chi games).
The B{\"u}chi game is played between \intro*\Eve and \intro*\Adam as follows. 
From a state $p$:
\begin{enumerate}
\itemAP \Eve chooses an "action" $a$ and a "transition" $(p,q) \in \Delta_a$;
\itemAP \Adam can either choose to
	\begin{description}
    \itemAP[""agree""] and the game continues from $q$, or
    \itemAP[""interrupt""] and choose another "transition" $(p,q') \in \Delta_a$, the game continues from $q'$.
    \end{description}
\end{enumerate}
We say that Eve wins if either the target state $t$ is reached or \Adam interrupts infinitely many times.
This is indeed a B{\"u}chi objective where the B{\"u}chi transitions are the interruptions of Adam and the self-loop around $t$.

\begin{lemma}\label{lem:Buchi}
There exists a "strategy" ensuring almost surely to reach $t$ from $s$ 
if and only if \Eve has a winning "strategy" from $s$ in the above B{\"u}chi game.
\end{lemma}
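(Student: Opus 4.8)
The plan is to establish both directions via the standard correspondence between almost-sure reachability in an MDP and winning in a qualitative (turn-based, zero-probability-insensitive) Büchi game, instantiated with the specific game described above. The key observation is that the game faithfully encodes the support structure of the MDP: a play in which \Adam \kl{agrees} forever corresponds to following the strategy $\sigma$ of \Eve along a single branch, whereas \Adam \kl{interrupting} lets him redirect the token to any other successor that has positive probability under the chosen action. So a positional strategy $\sigma$ for \Eve in the game is literally a strategy $\sigma : \st \to \act$ for the controller, and vice versa.

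First I would prove the ``if'' direction. Suppose \Eve has a winning strategy; since Büchi games are positionally determined, take a positional winning strategy, which gives a map $\sigma : \st \to \act$ (from state $p$, \Eve picks action $\sigma(p)$ and some successor, but the successor choice is irrelevant for \Eve's positional strategy once we extract $\sigma$). Consider the Markov chain obtained by fixing $\sigma$ in $\redmdp$ from $s$. I claim every state reachable in this Markov chain, from which $t$ is not yet reached, is a vertex from which \Eve wins: indeed, from such a $p$, \Eve plays $\sigma(p)$, and every probabilistic successor $q$ with $\rho(p,\sigma(p))(q)>0$ is a successor that \Adam could force by \kl{interrupting}, hence is still winning for \Eve. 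Now run the classical argument: in the induced chain, from any reachable non-target state there is a finite path to $t$ (otherwise \Adam could \kl{agree} forever inside the trap avoiding $t$, contradicting that all these vertices are winning — the only way \Eve wins without reaching $t$ is infinitely many interruptions, which \Adam need not provide). A standard bounded-probability-of-escape / Borel--Cantelli argument then shows $t$ is reached almost surely.

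For the ``only if'' direction, suppose the controller has a strategy almost surely reaching $t$ from $s$; by the classical result on qualitative reachability in MDPs one may take it pure and positional, i.e. a map $\sigma : \st \to \act$ (this is the point alluded to in Section~\ref{sec:reduction} justifying the restriction to positional strategies). Use $\sigma$ as \Eve's positional strategy in the game. I must show it is winning against every \Adam strategy. Consider any play: if \Adam \kl{interrupts} infinitely often, \Eve wins by the Büchi condition. Otherwise, from some point on \Adam always \kl{agrees}, so the play eventually follows a path of the $\sigma$-induced Markov chain. Such a path reaches $t$ unless it stays forever in the set $R$ of states reachable under $\sigma$ from which $t$ is \emph{not} almost surely reached — but by almost-sure reachability that set, restricted to states reachable from $s$, is empty; more precisely any state appearing after the last interruption is reachable in the $\sigma$-chain from a state reachable from $s$, hence lies in the almost-sure-reaching region, and a region from which one reaches $t$ with probability $1$ under a positional strategy contains no bottom SCC avoiding $t$, so the path must reach $t$. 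Hence \Eve wins.

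The main obstacle is the ``only if'' direction, specifically the careful bookkeeping needed when \Adam interrupts finitely but unboundedly often: one cannot fix in advance where the ``last'' interruption is, so I would phrase it contrapositively — if \Eve's strategy $\sigma$ loses against some \Adam strategy, then the losing play is eventually interruption-free and trapped in a set of states avoiding $t$; pick any state $p$ in that trap, note $p$ is reachable in the $\sigma$-induced chain from $s$ via a positive-probability path (each \kl{agree} and each earlier \kl{interrupt} moves along a positive-probability edge), and that from $p$ the $\sigma$-chain has an end component avoiding $t$, contradicting almost-sure reachability from $s$. The rest is routine appeal to positional determinacy of Büchi games and to the standard equivalence between qualitative MDP reachability and its ``graph game'' abstraction, both of which I am assuming as folklore.
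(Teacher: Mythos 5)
The game-to-MDP direction of your argument is sound. The gap is in the other direction: in the game, \Eve's move at a state $p$ is a \emph{pair} consisting of an action $a$ and a transition $(p,q)\in\Delta_a$, and you only transfer the action component of the controller's positional strategy $\sigma$, leaving the proposed transition unspecified. With an arbitrary choice of transition the resulting strategy for \Eve need not be winning. Concretely, take $\st=\{s,t\}$ and one action $a$ with $\rho(s,a)(s)=\rho(s,a)(t)=1/2$: the controller reaches $t$ almost surely, yet if \Eve always proposes the transition $(s,s)\in\Delta_a$ and \Adam always agrees, the play stays in $s$ forever with no interruptions, so \Eve loses. This is exactly where your justification breaks: after the last interruption the play is a single deterministic path of the $\sigma$-induced chain, selected by \Eve, and such a path can remain forever in a non-bottom SCC (the self-loop at $s$ above) even though the chain reaches $t$ with probability one; likewise the set of states it visits infinitely often is closed only under \Eve's chosen edges, not under all positive-probability edges, so it is not an end component of the chain and yields no contradiction with almost-sure reachability.

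The repair is standard but must be made explicit: from every state $p$ of the almost-sure region, let \Eve propose $\sigma(p)$ together with a successor $q$ lying on a shortest path to $t$ in the chain induced by $\sigma$ (equivalently, a successor of strictly smaller rank in the attractor-style level decomposition of that region). Then any interruption-free segment of length $|\st|$ reaches $t$, and, as you correctly observe, all of \Adam's deviations stay inside the region because they follow positive-probability edges under $\sigma$; hence every play either has infinitely many interruptions or reaches $t$ after the last one, and \Eve wins. Your first direction needs no such care and is correct as written. (For what it is worth, the paper records this lemma as folklore and gives no proof, so the assessment here concerns only the internal correctness of your argument.)
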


\AP We now explain how this reduction can be extended to the "stochastic control problem".
Let us consider an "MDP" $\redmdp$ and a "target" $t \in \st$.
We define an infinite B{\"u}chi game $\intro*\BuchiG$. 
The set of vertices is the set of "configurations" $\Nats^{\st}$.
The game is played as follows from a "configuration"~$\conf c$: 
\begin{enumerate}
\itemAP \Eve chooses an "action" $a$ and a "flow" $\fl f$ such that $\pre(\fl f) = \conf c$
and $\supp (\fl f) \subseteq \Delta_a$.
\itemAP \Adam can either choose to
    \begin{description}
      \itemAP[\emph{agree}] and the game continues from $\conf c' = \post(\fl f)$
      \itemAP[\emph{interrupt}] and choose a "flow" $\fl f'$ such that $\pre(\fl f') = \conf c$
    and $\supp (\fl f') \subseteq \Delta_a$, and the game continues from $\conf{c''} = \post(\fl f')$.
    \end{description}
\end{enumerate}
Note that \Eve choosing a flow $\fl f$ is equivalent to choosing for each "token" a "transition" $(p,q) \in \Delta_a$, 
inducing the "configuration" $\conf{c'}$, and similarly for Adam should he decide to "interrupt".

We now define the winning objective: \Eve wins if either all "tokens" are in the target "state", or if \Adam "interrupts" infinitely many times. Again this is a B{\"u}chi objective.

Although the game is infinite, it is actually a disjoint union of finite games.
Indeed, along a play the number of "tokens" is fixed, so all visited configurations belong to $\st^n$ for some $n \in \Nats$.

\begin{lemma}\label{lem:reduction-correctness}
Let $\conf{c}$ be a configuration with $n$ tokens in total, the following are equivalent:
\begin{itemize}
\item There exists a strategy almost surely reaching $n \cdot t$ from $\conf{c}$,
\item \Eve has a winning strategy in the B{\"u}chi game $\BuchiG$ starting from~$\conf c$.
\end{itemize}
\end{lemma}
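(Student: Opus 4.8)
The plan is to derive this as a consequence of Lemma~\ref{lem:Buchi} applied to the product "MDP" $\mdp^n$, after recognising the B\"uchi game canonically associated with $\mdp^n$ as the restriction of $\BuchiG$ to the "configurations" with exactly $n$ "tokens". Thus the whole argument is an exercise in unfolding definitions and checking that two B\"uchi games are ``the same''.

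First I would fix a tuple $u \in \st^n$ whose multiset of components is $\conf c$; since the product "MDP" is invariant under permuting the "tokens" and all objects below are permutation invariant, the choice of $u$ is immaterial, and a "strategy" reaching $n \cdot t$ "almost surely" from $\conf c$ is the same thing as one reaching $n \cdot t$ "almost surely" from $u$. Because $t$ is a "sink", $n \cdot t$ is a "sink" of $\mdp^n$, so Lemma~\ref{lem:Buchi} applies to $\mdp^n$ with "source" $u$ and "target" $n \cdot t$: such a "strategy" exists if and only if \Eve wins, from $u$, the B\"uchi game $\mathcal G_{\mdp^n}$ in which from a state $v' \in \st^n$ \Eve picks an "action" $a$ together with a "transition" $(v', v'') \in \Delta_a$ of $\mdp^n$ --- equivalently a "transition" $(v'_i, v''_i) \in \Delta_a$ for each index $i$ --- while \Adam may "interrupt" by replacing $v''$ by some $v'''$ with $(v', v''') \in \Delta_a$, and where the B\"uchi vertices are the "interruptions" of \Adam and the "target" $n \cdot t$.

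Next I would make the identification precise via the map $\pi \colon \st^n \to \Nats^{\st}$ sending a tuple to the "configuration" given by its multiset of components, and check that $\pi$ is a bisimulation between $\mathcal G_{\mdp^n}$ and the sub-game of $\BuchiG$ induced by the "configurations" with $n$ "tokens". The point is that, from a state $v'$, choosing for each "token" a "transition" in $\Delta_a$ is exactly the same as choosing a "flow" $\fl f$ with $\pre(\fl f) = \pi(v')$ and $\supp(\fl f) \subseteq \Delta_a$, up to the identity of the "tokens"; hence the moves of \Eve (resp.\ of \Adam) in $\mathcal G_{\mdp^n}$ from $v'$ project onto exactly the moves of \Eve (resp.\ \Adam) in $\BuchiG$ from $\pi(v')$, and the reached state projects accordingly. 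Moreover $\pi$ carries the B\"uchi vertices of $\mathcal G_{\mdp^n}$ onto those of $\BuchiG$ (the "configuration" $n \cdot t$ and the "interruptions" of \Adam). Since positional strategies suffice in B\"uchi games, one can transfer a winning positional "strategy" of \Eve along $\pi$ in either direction --- for the lift, whenever the $\BuchiG$-strategy prescribes a "flow" $\fl f$ at $\pi(v')$, \Eve plays in $\mathcal G_{\mdp^n}$ any "transition" of $\mdp^n$ realising $\fl f$ --- obtaining that \Eve wins $\mathcal G_{\mdp^n}$ from $u$ if and only if \Eve wins $\BuchiG$ from $\pi(u) = \conf c$. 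Chaining this with Lemma~\ref{lem:Buchi} yields the stated equivalence.

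The only genuine subtlety, and the step I would write out most carefully, is that $\pi$ is not injective, so a "configuration" has several lifts in $\st^n$; this is harmless precisely because both games and their winning conditions are symmetric under permutations of the "tokens", so an arbitrary choice of lift can always be made when transferring a "strategy" or a play. The rest --- legality of projected and lifted plays and the fact that they satisfy the same B\"uchi condition --- is a direct reading of the two game definitions.
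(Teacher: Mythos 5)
Your proposal is correct and follows the same route as the paper, which simply states that the lemma follows from applying Lemma~\ref{lem:Buchi} to the product MDP $\mdp^n$; you merely spell out the implicit identification (via the multiset projection and permutation invariance) between the B\"uchi game for $\mdp^n$ and the $n$-token part of $\BuchiG$, which the paper leaves to the reader.
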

Lemma~\ref{lem:reduction-correctness} follows from applying Lemma~\ref{lem:Buchi} on the product "MDP" $\redmdp^n$.

\AP We define the game $\intro*\BuchiG^i$ for $i \in \Nats$, which is defined just as $\BuchiG$ except for the winning objective: 
\Eve wins in $\BuchiG^i$ if either all tokens are in the target state, or if \Adam "interrupts" more than $i$ times. 
It is clear that if \Eve has a winning strategy in $\BuchiG$ then she has a winning strategy in $\BuchiG^i$. 
Conversely, the following result states that $\BuchiG^i$ is equivalent to $\BuchiG$ for some $i$.
Let $\fin$ be the set of "configurations" for which all tokens are in "state"~$t$. 
We let $\Win^i \subseteq \Nats^{\st}$ denote the winning region for \Eve in the game $\BuchiG^i$,
and $\Win^{\infty} \subseteq \Nats^{\st}$ denote the winning region for \Eve in the game $\BuchiG$.

\begin{lemma}\label{lemma:finiteindex}\hfill
\begin{itemize}
	\item If $\Win^i = \Win^{i+1}$, then $\Win^i = \Win^{i + j}$ for all $j \ge 0$,
	\item $(\Win^i)_{i \in \Nats}$ is a non-increasing sequence of downward closed sets,
	\item There exists $i \in \Nats$ such that $\Win^i = \Win^{\infty}$.
\end{itemize}
\end{lemma}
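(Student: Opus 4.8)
The plan is to establish the three bullets in order, since each relies on the previous ones. For the \textbf{first bullet}, the key observation is a monotonicity/shift property of the games $\BuchiG^i$: a winning strategy for \Eve in $\BuchiG^{i+1}$ from a configuration $\conf c$ can be transformed into a winning strategy in $\BuchiG^{i}$ from any configuration $\conf{c'}$ reachable from $\conf c$ after one \Adam "interruption" (the "budget" of remaining interruptions decreases by one on each side). More precisely, I would prove the auxiliary fact that $\Win^{i+1} \subseteq \pre_{\Eve}(\Win^{i} \cup \fin)$ in an appropriate sense — whenever \Eve can win with budget $i+1$, after her move either the game is already won or \Adam either "agrees" (and then the resulting configuration is in $\Win^{i+1}$, hence by induction we keep going) or "interrupts" (and then the resulting configuration must be in $\Win^{i}$). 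Combining this with the reverse inclusion $\Win^{i+1} \subseteq \Win^{i}$ (which is immediate, since a strategy surviving more than $i+1$ interruptions a fortiori survives more than $i$), one shows $\Win^{i} = \Win^{i+1}$ implies that the same strategy witnessing membership in $\Win^{i+1}$ actually witnesses membership in $\Win^{i+2}$: since \Adam's interruptions now always land in $\Win^{i} = \Win^{i+1}$, \Eve can restart her strategy with a fresh budget each time an interruption occurs, so no finite bound on interruptions is ever exceeded. Iterating gives $\Win^{i} = \Win^{i+j}$ for all $j \ge 0$.

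For the \textbf{second bullet}, the inclusion $\Win^{i+1} \subseteq \Win^{i}$ is the immediate observation above, so $(\Win^i)_{i}$ is non-increasing. For downward closure of each $\Win^{i}$: I would argue that if $\conf c \in \Win^{i}$ and $\conf d \le \conf c$, then \Eve wins $\BuchiG^{i}$ from $\conf d$ by "simulating" her winning strategy from $\conf c$ on the extra tokens present in $\conf c \setminus \conf d$ — formally, \Eve plays from $\conf d$ the restriction of the flow she would play from $\conf c$ (possible since "flows" and their supports restrict componentwise, and $\supp$ of a restricted flow is still contained in $\Delta_a$), and symmetrically any \Adam move from $\conf d$ extends to an \Adam move from $\conf c$ by adding back the missing tokens with an arbitrary compatible flow; the "interrupt" counter is unaffected, and if all tokens reach $t$ in the $\conf c$-play then in particular all tokens of the sub-play reach $t$. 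This is the standard monotonicity argument for these token games and is essentially routine.

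For the \textbf{third bullet}, I would combine the first two with Lemma~\ref{lem:WQOstuff}: $(\Win^i)_{i\in\Nats}$ is a non-increasing sequence of "downward closed sets" in the "WQO" $(\Nats^{\st}, \le)$, hence by the first item of Lemma~\ref{lem:WQOstuff} it is eventually constant, say $\Win^{i} = \Win^{i+1}$ for some $i$; by the first bullet this common value equals $\Win^{i+j}$ for all $j$, i.e. $\Win^{i} = \bigcap_{j} \Win^{j}$. It remains to identify $\bigcap_j \Win^j$ with $\Win^{\infty}$. The inclusion $\Win^{\infty} \subseteq \Win^{j}$ for all $j$ was already noted in the text preceding the lemma. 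For the converse, suppose $\conf c \in \Win^{i}$ where $i$ is the stabilisation index; then \Eve has a strategy in $\BuchiG^{i}$ from $\conf c$, and since every "interruption" by \Adam leads back into $\Win^{i}$, \Eve can adopt the strategy "whenever \Adam has just interrupted, restart the $\BuchiG^{i}$-winning strategy from the current configuration". Along any resulting play, either all tokens eventually reach $t$ (in which case \Eve wins the B\"uchi objective of $\BuchiG$), or \Adam interrupts infinitely often (also a win for \Eve in $\BuchiG$), because between any two consecutive interruptions \Eve's freshly-restarted strategy guarantees that if fewer than $i{+}1$ further interruptions occur then the target is reached. Hence $\conf c \in \Win^{\infty}$.

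The main obstacle is the \textbf{first bullet}, specifically making rigorous the "restart on interruption" argument and the claim that $\Win^{i} = \Win^{i+1}$ propagates upward; this requires carefully tracking the interruption budget and verifying that \Eve's composed (restarting) strategy is well-defined and winning against \emph{all} \Adam strategies, including those that interrupt with carefully chosen flows. The downward-closure argument (second bullet) and the WQO application (third bullet) are comparatively standard once the game's monotonicity in the token count and in the interruption budget is set up cleanly.
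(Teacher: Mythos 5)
Your proposal is correct, and on the first two items it essentially matches the paper: downward closure of each $\Win^i$ is proved there by the same simulation argument (the play from the smaller configuration mirrors, componentwise, a play from the larger one with the same interruption pattern), and the inclusion $\Win^{i+1} \subseteq \Win^i$ is immediate; the paper dismisses the first item as following from the definitions, whereas you make the restart argument explicit. Where you genuinely diverge is the third item. The paper proves $\bigcap_i \Win^i = \Win^{\infty}$ outright, arguing on \Adam's side: for $\conf c \notin \Win^{\infty}$ it takes a positional winning strategy of \Adam in $\BuchiG$ (positional determinacy of B{\"u}chi games on the finite component containing $\conf c$) and a pumping argument showing that this strategy never concedes more than $i > n \cdot |\st|$ interruptions, hence $\conf c \notin \Win^i$; only then does it invoke Lemma~\ref{lem:WQOstuff} to obtain a stabilisation index. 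You instead apply the WQO stabilisation first and then argue on \Eve's side: from $\conf c \in \Win^i = \Win^{i+1}$, play a $\BuchiG^{i+1}$-winning strategy and restart it after every interruption, which lands back in $\Win^i = \Win^{i+1}$; every consistent play then either reaches the target or contains infinitely many interruptions, so $\Win^i \subseteq \Win^{\infty}$. Your route avoids positional determinacy and pumping altogether and simultaneously yields the first item, at the price of identifying $\Win^{\infty}$ only with the stabilised value rather than with $\bigcap_i \Win^i$ in general (which is all the lemma needs) and of having to verify carefully that residual strategies after an interruption witness membership in $\Win^i$ and that the composed restarting strategy is well defined. One small index slip to fix: with a $\BuchiG^{i}$-winning strategy an interruption is only guaranteed to land in $\Win^{i-1}$; the restart argument should be run with the $\BuchiG^{i+1}$-winning strategy (as you in fact do when discussing the first item), using $\Win^i = \Win^{i+1}$ so that interruptions land back where a fresh strategy is available.
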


\begin{proof}
The first item follows from the definition of $\Win^i$.

For the second item, we argue that each $\Win^i$ is "downward closed": 
if $\conf c \le \conf c'$ and $\conf c' \in \Win^i$, then $\conf c \in \Win^i$.
Let $\sigma$ be a winning strategy from $\conf c'$, we show that it induces a strategy from $\conf c$:
for any play from $\conf c'$ ending in $\conf c'_e$ there exists a play from $\conf c$ ending in $\conf c_e$ consistent with $\sigma$ such that $\conf c_e \le \conf c'_e$ and the two plays are interrupted at the same steps.
This is easily proved by induction on the length of the play.
Since $\sigma$ is a winning strategy from $\conf c'$ all plays either reach $n \cdot t$ for some $n \in \Nats$ or Adam interrupts $i$ times, so there exists a length $N$ such that either happens within the first $N$ steps.
The invariant above implies that the corresponding strategy is winning from $\conf c$.

For the third item, let $X = \bigcap_{i \in \Nats} \Win^i$, we first argue that $X = \Win^{\infty}$.
It is clear that $\Win^{\infty} \subseteq X$: 
if \Eve has a strategy to ensure that either all tokens are in the target state, or that \Adam "interrupts" infinitely many times, then it particular this is true for Adam interrupting more than $i$ times for any $i \in \Nats$.
For the converse inclusion, we show that $\Nats^{\st} \setminus \Win^{\infty} \subseteq \Nats^{\st} \setminus X$:
let $\conf c$ such that Adam has a winning strategy in $\BuchiG$ from $\conf c$. 
By positional determinacy of B{\"u}chi games, we can choose the strategy to be positional.
Let $n$ the number of tokens in $c$, and $i > n \cdot |\st|$, 
we claim that Adam's strategy ensures never to interrupt more than $i$ times from $c$.
Indeed if there exists a play from $c$ consistent with that strategy where he interrupts more than $i$ times, 
then a simple pumping argument yields a play from $c$ consistent with the strategy interrupting infinitely many times, a contradiction.
Thus $\conf c$ is in $\Nats^{\st} \setminus \Win^i$, hence a fortiori in $\Nats^{\st} \setminus X$.

Thanks to the second item $(\Win^i)_{i \ge 0}$ is a non-increasing sequence of "downward closed sets" in $\Nats^{\st}$,
so it stabilises thanks to Lemma~\ref{lem:WQOstuff}, \textit{i.e.} there exists $i_0 \in \Nats$ such that $\Win^{i_0} = \bigcap_i \Win^i$, 
which concludes the proof of the third item.
\end{proof}

Note that Lemma~\ref{lem:reduction-correctness} and Lemma~\ref{lemma:finiteindex} substantiate the claims made in Section~\ref{sec:problem}:
pure positional strategies are enough and the answer to the stochastic control problem does not depend upon the exact probabilities in the "MDP".
Indeed, the construction of the B{\"u}chi games does not depend on them, 
and the answer to the former is equivalent to determining whether \Eve has a winning strategy in each of them.

We are now fully equipped to show that a solution to the "sequential flow problem"
yields the decidability of the "stochastic control problem".
\AP Note first that $\Win^0 = \SFP(\Flows^0,\fin)$ where\phantomintro{\Flows^{0}}
\begin{align*}
  \reintro*\Flows^0=\{\fl f \in \Nats^{\st \times \st} \mid \exists a \in \act,\ \supp (\fl f) \subseteq \Delta_a\}.
\end{align*}
Indeed, in the game $\BuchiG^0$ \Adam cannot "interrupt" as this would make him lose immediately. 
Hence, the winning region for \Eve in $\BuchiG^0$ is
$\SFP(\Flows^0,\fin)$.
\AP We generalise this by setting $\intro*\Flows^{i}$ for all~$i>0$ to be the set of "flows"~$\fl f\in\Nats^{\st\times\st}$ such that for some "action"~$a \in \actions$,
\begin{itemize}
\item $\supp(\fl f)\subseteq\transitions_a$, and
\item for every~$\fl{f'}$, if $\pre(\fl{f'}) = \pre(\fl f)$ and $\supp(\fl{f'}) \subseteq \transitions_a$,
then $\post(\fl{f'}) \in \Win^{i-1}$.
\end{itemize}
Equivalently, this is the set of "flows" for which, when played in the game~$\BuchiG$ by~\Eve, \Adam cannot use an "interrupt" move and force the configuration outside of~$\Win^{i-1}$.

\begin{lemma}\label{lem:characterisation_winning_region}
For all $i \in \Nats$, 
\[
\Win^{i} = \SFP(\Flows^i,\fin)
\]
\end{lemma}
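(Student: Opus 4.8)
The plan is to prove the identity $\Win^{i} = \SFP(\Flows^i,\fin)$ by induction on $i$, having already established the base case $i=0$ just before the statement. The inductive step has two inclusions, and I would handle each by translating between plays of the B\"uchi game $\BuchiG^i$ and flow words over $\Flows^i$.

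\textbf{From $\SFP(\Flows^i,\fin)$ to $\Win^i$.} Suppose $\conf c \in \SFP(\Flows^i,\fin)$, so there is a flow word $\bfl f = \fl{f_1}\cdots\fl{f_\ell} \in (\Flows^i)^*$ with $\conf c \flowpath{\bfl f} \conf{c'} \in \fin$. I would describe \Eve's strategy from $\conf c$ in $\BuchiG^i$: at step $j$, play the action $a$ witnessing $\fl{f_j} \in \Flows^i$ together with the flow $\fl{f_j}$ itself. If \Adam \agree{}s every time, the game follows $\conf{c_0},\dots,\conf{c_\ell} = \conf{c'} \in \fin$ and \Eve wins by reaching all tokens in $t$. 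If \Adam \interrupt{}s at some step $j$, then by definition of $\Flows^i$ (second bullet), whatever flow $\fl{f'}$ \Adam picks with $\pre(\fl{f'}) = \pre(\fl{f_j}) = \conf{c_{j-1}}$ and $\supp(\fl{f'}) \subseteq \transitions_a$, the resulting configuration $\post(\fl{f'})$ lies in $\Win^{i-1}$. From there \Eve uses her winning strategy in $\BuchiG^{i-1}$; since one interrupt has already been spent, reaching a further $i-1$ interrupts (or reaching $n\cdot t$) suffices to win $\BuchiG^i$. A small bookkeeping point: I should handle the case where at step $j$ the remaining prefix $\conf{c_{j-1}}$ doesn't match — but since \Adam's \agree{} moves are deterministic ($\conf c' = \post(\fl f)$ is forced), the play is pinned to the chosen sequence until the first interrupt, so no mismatch arises.

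\textbf{From $\Win^i$ to $\SFP(\Flows^i,\fin)$.} Suppose \Eve has a winning strategy $\sigma$ in $\BuchiG^i$ from $\conf c$. Consider the play where \Adam always \agree{}s. This play visits configurations $\conf{c_0} = \conf c, \conf{c_1},\dots$ where $\conf{c_j} \flowarrow{\fl{f_{j+1}}} \conf{c_{j+1}}$ with $\fl{f_{j+1}}$ the flow \Eve plays and $\supp(\fl{f_{j+1}}) \subseteq \transitions_{a_{j+1}}$ for her chosen action. Since \Adam never interrupts, \Eve must win by reaching all tokens in $t$ at some finite step $\ell$, i.e.\ $\conf{c_\ell} \in \fin$. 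The claim is that each $\fl{f_{j+1}}$ lies in $\Flows^i$: by the second bullet in the definition of $\Flows^i$, I must check that any $\fl{f'}$ with $\pre(\fl{f'}) = \conf{c_j}$ and $\supp(\fl{f'}) \subseteq \transitions_{a_{j+1}}$ has $\post(\fl{f'}) \in \Win^{i-1}$. This holds because, in the play where \Adam plays \agree{} for steps $1,\dots,j$ and then \interrupt{}s at step $j+1$ choosing $\fl{f'}$, the resulting configuration $\post(\fl{f'})$ must be winning for \Eve in the residual game, which (one interrupt consumed) is $\BuchiG^{i-1}$; hence $\post(\fl{f'}) \in \Win^{i-1}$ by Lemma~\ref{lem:reduction-correctness}-style reasoning, or rather directly by definition of $\Win^{i-1}$. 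So $\bfl f = \fl{f_1}\cdots\fl{f_\ell} \in (\Flows^i)^*$ and $\conf c \flowpath{\bfl f} \conf{c_\ell} \in \fin$, giving $\conf c \in \SFP(\Flows^i,\fin)$.

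The main obstacle I anticipate is the careful alignment of the \emph{B\"uchi counter}: after \Adam's first interrupt in $\BuchiG^i$, the game \Eve must then win is $\BuchiG^{i-1}$ (not $\BuchiG^i$), and it is precisely this shift that the definition of $\Flows^i$ via $\Win^{i-1}$ is designed to capture. The argument is essentially a one-step unfolding of the B\"uchi game combined with the inductive hypothesis $\Win^{i-1} = \SFP(\Flows^{i-1},\fin)$ --- though note that the inductive hypothesis is not even needed for the equality $\Win^i = \SFP(\Flows^i,\fin)$ itself, since $\Flows^i$ is defined directly in terms of $\Win^{i-1}$; what one really proves is a direct one-step correspondence, and the induction only enters if one wants everything expressed purely in terms of the $\Flows^j$ hierarchy. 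I would therefore present it as a direct (non-inductive) argument unfolding one round of $\BuchiG^i$, with the two translations above, being careful in the second direction that restricting to the all-\agree{} play of $\sigma$ still lets me probe \Adam's interrupt responses at every prefix.
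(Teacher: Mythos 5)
Your proof is correct and follows essentially the same route as the paper's: the same two translations between flow words over $\Flows^i$ and plays of $\BuchiG^i$, with the definition of $\Flows^i$ used exactly as in the paper to handle \Adam's "interrupt" moves (first direction: an interrupt lands in $\Win^{i-1}$ where Eve continues with a strategy winning there; second direction: the all-"agree" play must reach $\fin$ and every potential interrupt response is pinned inside $\Win^{i-1}$ because the residual of the winning strategy wins $\BuchiG^{i-1}$). Your side remark that the induction is essentially cosmetic is accurate: the paper phrases the argument as an induction on $i$, but the "induction hypothesis" it invokes is really just the definition of $\Win^{i-1}$ as the winning region of $\BuchiG^{i-1}$, so your direct one-step-unfolding presentation is an equally valid way to organise the same proof.
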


Before proving this key lemma, let us discuss how it induces an algorithm for solving the "stochastic control problem".
Let us write $I = \dc{(\omega \cdot s)}$ for the source ideal.
Thanks to Lemma~\ref{lem:reduction-correctness} the answer to the stochastic control problem is Yes
if and only if $I \subseteq \Win^{\infty}$.
Thanks to Lemma~\ref{lemma:finiteindex} to compute $\Win^{\infty}$,
it is enough to compute $\Win^0, \Win^1, \dots$ until $\Win^i = \Win^{i+1}$, implying that $\Win^i = \Win^{\infty}$.
Thanks to Lemma~\ref{lem:characterisation_winning_region}, 
for each~$i$ computing~$\Win^i$ reduces to solving the sequential flow problem.

\begin{proof}
We proceed by induction on $i$.

Let~$\conf c$ be a "configuration" in $\SFP(\Flows^i,\fin)$. This means that
there exists a "flow word"~$\bfl f = \fl {f_1} \cdots \fl {f_\ell}$ such that~$f_k \in \Flows^i$ for all~$k$,
and $\conf c\flowpath{\bfl f}\conf{c'} \in \fin$. 
Expanding the definition, there exist $\conf c_0 = \conf c, \dots, \conf c_\ell = \conf c'$
such that $\conf c_{k-1} \flowarrow{f_k} \conf c_k$ for all~$k$.

Let us now describe a strategy for \Eve in~$\BuchiG^i$ starting from~$\conf c$.
As long as \Adam "agrees", \Eve successively chooses the sequence of flows $f_1,f_2,\dots$
and the corresponding configurations $\conf c_1,\conf c_2,\dots$.
If~\Adam never "interrupts", then the game reaches the "configuration" $\conf{c'} \in \fin$, and \Eve wins.
Otherwise, as soon as \Adam "interrupts", 
by definition of~$\Flows^i$, we reach a "configuration" $\conf d \in \Win^{i-1}$.
By induction hypothesis, \Eve has a strategy which ensures from~$\conf d$ to 
either reach~$\fin$ or that \Adam interrupts at least~$i-1$ times. 
In the latter case, adding the "interrupt" move leading to $\conf d$ yields at least~$i$ "interrupts",
so this is a winning strategy for \Eve in~$\BuchiG^i$, witnessing that~$\conf c \in \Win^i$.

Conversely, assume that there is a winning strategy~$\sigma$ of~\Eve in~$\BuchiG^i$ from a "configuration"~$\conf c$.
Consider a play consistent with $\sigma$, it either reaches~$\fin$ or \Adam "interrupts".
Let~$\bfl f = f_1 f_2 \dots f_{\ell}$ denote the sequence of flows until then.
We argue that~$f_k \in \Flows^i$ for~$k \in [1,\ell]$. 
Let $f = f_k$ for some $k$, by definition of the game~$\supp(f) \subseteq \transitions_a$ for some "action"~$a$. 
Let~$\fl f'$ such that~$\pre(\fl{f'}) = \pre(\fl{f})$ and~$\supp(\fl{f'}) \subseteq \transitions_a$. 
In the game~$\BuchiG$ after \Eve played~$\fl{f_k}$, 
\Adam has the possibility to "interrupt" and choose~$\fl{f'}$.
From this configuration onward the strategy $\sigma$ is winning in~$\BuchiG^{i-1}$,
implying that~$\fl{f} \in \Flows^i$.
Thus~$\bfl f=\fl{f_1}\fl{f_2}\dots\fl{f_{\ell}}$ is a witness that~$\conf c \in \Win^i$.
\end{proof}

\section{Solution to the simple sequential flow problem}\label{sec:decidability_simple}

We study the "simple sequential flow problem", stated as follows.
Let $\states$ be a finite set of states, $\Flows \subseteq \Nats^{\states \times \states}$ a "downward closed" set of "flows" represented as the finite union of capacity ideals from $\Capa$, $s$ a source state inducing $I = \dc {(\omega \cdot s)}$ the source "ideal", and $\fin = \dc b$ the target "ideal".
Determine whether for all $n \in \Nats$, there exists a "configuration" $c \in \fin$, a "capacity word" $w \in \Capa^*$, and a "flow word" $\bfl f \leq w$ such that $n \cdot s \flowpath{\bfl f} c$.

\begin{theorem}\label{thm:main-simple-sfp}
The "simple sequential flow problem" is decidable in $\textsc{EXPSPACE}$.
\end{theorem}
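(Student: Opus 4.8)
The plan is to reduce the simple sequential flow problem to a boundedness question about weighted automata (distance/desert automata), and then invoke the decidability of boundedness from the theory of regular cost functions. The intuition is that the answer is ``No'' precisely when, as $n$ grows, the amount of ``loss'' forced along any valid flow word grows unboundedly; by a compactness/König's lemma argument this is equivalent to the existence of a single infinite branching structure witnessing unbounded loss, which in turn is exactly what a boundedness question over a suitable automaton captures. To make this precise I would first show that it suffices to track, for each $n$, the \emph{minimal number of tokens lost} (i.e. not reaching $\fin$) over all capacity words of all lengths; call this quantity $\mathrm{loss}(n)$. The answer to the problem is ``Yes'' iff $\mathrm{loss}(n) = 0$ for all $n$, but I would argue the key dichotomy: either $\mathrm{loss}$ is bounded (in fact eventually $0$, by a monotonicity argument: a witness for $n+1$ restricts to a witness for $n$, and losing a token is ``monotone''), or it is unbounded. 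So it is enough to decide whether $\sup_n \mathrm{loss}(n) < \infty$.

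Next I would set up the automaton. Fix the target ideal $\fin = \dc b$. Build a finite automaton $\aut$ whose input alphabet is $\Capa$ (so a run reads a capacity word $w = a_1 \cdots a_\ell$), and whose states encode enough information about ``where a token can be'' — because $\Capa$ uses only finitely many states and the only genuinely unbounded data is multiplicities, I would aggregate tokens by their current state, i.e. work with configurations abstractly. The crucial move is to dualise via the max-flow min-cut theorem: by max-flow min-cut, given a source configuration $c$ and a capacity word $w$, the maximum number of tokens that can be routed from $n\cdot s$ to a configuration in $\fin$ along $w$ equals $n$ minus the minimum over \emph{cuts} of a certain cost; a cut here is a choice, at each position of $w$, of a partition of states, and its cost counts edges crossing in the wrong direction. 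This turns the existential statement ``there is a flow word $\bfl f \le w$ reaching $\fin$'' into a min-cost statement over cuts, which is naturally a min-plus (distance automaton) quantity. Concretely I would let $\aut$ be a distance (or desert) automaton over $\Capa$ computing, as a function of $w$, the minimum cut cost $\gamma(w)$; then $\mathrm{loss}(n) = \min_{w} \max(n \cdot \text{(structural coefficient from the cut)} \ominus \dots)$ — more carefully, the per-token loss is governed by how the cut ``charges'' the source multiplicity, and boundedness of $\mathrm{loss}(n)$ over all $n$ becomes exactly boundedness of the function $w \mapsto \gamma(w)$ over all $w \in \Capa^*$ (the point being that if some cut has zero charge on the source for all large $n$ then no loss is forced, and otherwise loss scales with $n$). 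I would make the automaton of size exponential in $|\states|$ and $\|\Capa\|$ — the exponential blowup coming from subset-style bookkeeping over states and from representing the finitely many relevant ``shapes'' of $b$ — and polynomial otherwise.

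Then I would invoke the decidability of the boundedness (limitedness) problem for distance automata, respectively desert automata, from the theory of regular cost functions: deciding whether $\sup_{w} \gamma(w) < \infty$ is decidable, and in fact in $\textsc{PSPACE}$ in the size of the automaton (Hashiguchi / Leung / Simon, recast in Colcombet's framework). Since the automaton has size singly exponential in the input, this gives an $\textsc{EXPSPACE}$ procedure overall: construct $\aut$ (exponential time and space), then run the $\textsc{PSPACE}$-in-$|\aut|$ boundedness test, for a total of $\textsc{PSPACE}(\exp(n)) = \textsc{EXPSPACE}$. Finally I would tie up the correctness: show that $\aut$ really computes $\gamma$ (this is the bookkeeping-heavy direction, verifying that every min-cut corresponds to an accepting run of cost equal to its weight and conversely), and that $\gamma$ unbounded $\iff$ $\mathrm{loss}$ unbounded $\iff$ the answer is ``No''.

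The main obstacle I expect is the second step: getting the right automaton. The delicate points are (i) correctly encoding the sequential min-cut — a cut of a capacity \emph{word} is a time-indexed family of state-partitions, and one must check via max-flow min-cut that the minimum such cut exactly equals the unroutable mass, taking care of the $\omega$-capacities (which force the cut never to cross an unbounded edge the wrong way) and of the target being an ideal $\dc b$ rather than a single state (so tokens ``absorbed'' once they fit under $b$); and (ii) separating the ``constant loss'' contributions (cut edges with finite capacity, which contribute boundedly) from the ``$n$-proportional'' contributions (a cut that disconnects the source side entirely from the target side costs $n$ tokens), so that boundedness of the automaton's value function is genuinely equivalent to $\mathrm{loss}$ being bounded and not merely an over- or under-approximation. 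Handling these correctly is what makes the reduction work; once it is in place, the appeal to regular cost function theory is essentially turnkey.
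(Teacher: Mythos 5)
Your plan is essentially the paper's proof: for a fixed capacity word $w$, build the layered graph whose layers are copies of $\states$, with a fresh sink reached from the last layer by edges of capacity $b(q)$, dualise via max-flow min-cut, and let a distance automaton with state set $\pow{\states}$ (recording the states still reachable from the source after the edges cut so far) guess a cut and charge its cost, with final cost $\sum_{q \in X} b(q)$; Kirsten's \textsc{PSPACE} boundedness test on this exponential-size automaton then gives \textsc{EXPSPACE}. The one substantive slip is the direction of your final equivalence. Writing $\Phi(w)$ for the maximal number of tokens routable from the source into $\fin$ under $w$ (equal to your minimal cut cost $\gamma(w)$ by max-flow min-cut), the answer to the problem is Yes iff $\Phi$ is \emph{unbounded} over $\Capa^*$, and your quantity satisfies $\mathrm{loss}(n)=\max(0,\,n-\sup_w \Phi(w))$, so $\mathrm{loss}$ is bounded iff $\gamma$ is \emph{unbounded}; the claim that ``boundedness of loss becomes exactly boundedness of $\gamma$'' is therefore inverted, and the monotonicity argument you give (``bounded, hence eventually $0$'') does not by itself yield this --- the min-cut identity above is what does. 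None of this affects decidability or the complexity bound, since the algorithm simply decides boundedness of the automaton and interprets the answer accordingly, but the detour through $\mathrm{loss}$ (and the unfinished ``structural coefficient'' formula) is exactly where the confusion enters; working directly with $\Phi$, as the paper does, removes it. Your worries (i)--(ii) about $\omega$-capacities and the ideal target are handled automatically by the min-cut formulation: $\omega$-edges are simply never cut in a finite-cost run, and the target ideal $\dc{b}$ enters only through the final-cost vector $b$.
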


The proof makes use of two ingredients: "distance automata" and the max-flow min-cut theorem.

\medskip
\AP ""Distance automata"" are weighted automata over the semiring $(\Nats \cup \{\omega\},+,\min)$:
a weighted automaton over the alphabet $\Sigma$ is given by a set of states $\st$, an initial state $q_0 \in \st$,
a vector of final costs $\eta \in (\Nats \cup \{\omega\})^\st$, and a transition relation
\[
\Delta \subseteq \st \times \Sigma \times (\Nats \cup \{\omega\}) \times \st,
\]
with the following interpretation: $(p,a,c,q) \in \Delta$ is a transition from $p$ to $q$ reading letter $a$ with cost $c$.
A run $\rho$ over $w = w_1 \dots w_\ell \in \Sigma^*$ is a sequence of transitions
\[
\rho = (q_0,w_1,c_1,q_1) (q_1, w_2, c_2, q_2) \dots (q_{\ell - 1}, w_\ell, c_\ell, q_\ell).
\]

The ""value"" of $\rho$ is $\val(\rho) = \sum_{i \in [1,\ell]} c_i + \eta(q_\ell)$.
A weighted automaton $\aut$ induces the function $\sem{\aut} : \Sigma^* \to \N \cup \set{\omega}$ defined by
\[
\sem{\aut}(w) = \min_{\rho \text{ run over } w} \val(\rho).
\]
We say that $\aut$ recognises the function $\sem{\aut}$.

The ""boundedness problem"" for "distance automata" asks whether for a given "distance automaton" $\aut$ the function $\sem{\aut}$ is bounded, meaning:
\[
\exists n \in \Nats,\ \forall w \in \Sigma^*,\ \sem{\aut}(w) \le n.
\]

\begin{thmC}[\cite{Kirsten05}]\label{thm:boundedness-of-distance}
The "boundedness problem" for "distance automata" is decidable in $\textsc{PSPACE}$.
\end{thmC}

We reduce the "simple sequential flow problem" to the "boundedness problem" for "distance automata" by considering the following function:
\[
\begin{array}{lccl}
\Phi: & \Capa^* & \longrightarrow & \Nats \cup \{\omega\} \\
      & \fl w   & \longmapsto     & \sup \{n \in \Nats \mid \exists c \in \fin, \exists \bfl f, 
      \bfl f \leq \fl w \text{ and } n \cdot s \flowpath{\bfl f} c\}.
\end{array}
\]
In words, given as input a "word of capacities" $w$, $\Phi$ computes the largest number of "tokens" one may synchronize in $F$ using $w$.

By construction, whether for all $n \in \Nats$, there exists a "configuration" $c \in \fin$, a "capacity word" $w \in \Capa^*$ and a "flow word" $\bfl f \leq w$ such that $n \cdot s \flowpath{\bfl f} c$ is equivalent to $\Phi$ being unbounded. We will prove the following.

\begin{lemma}\label{lem:distance}
The function $\Phi$ is computed by a "distance automaton" with $2^{|\states|}$ "states".
\end{lemma}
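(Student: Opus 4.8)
The plan is to build a \kl{distance automaton} $\aut_\Phi$ over the alphabet $\Capa$ whose set of states is $\pow{\states}$ (hence $2^{|\states|}$ many), so that $\sem{\aut_\Phi} = \Phi$. The guiding idea is that $\Phi(\fl w)$, the largest number of \kl{tokens} synchronizable into $\fin$ along the \kl{capacity word} $\fl w = \fl a_1 \cdots \fl a_\ell$, is governed by a \emph{bottleneck}: by a max-flow min-cut argument it equals the minimum, over all ways of "cutting" the sequence, of the total \kl{capacity} crossing the cut. Concretely, a \kl{flow word} $\bfl f \le \fl w$ routing $n \cdot \conf s$ to some $\conf c \in \fin = \dc{(\omega \cdot t)}$ can be thought of as a flow of value $n$ in an acyclic layered network: layer $0$ is a single source feeding $s$, layer $i$ ($1 \le i \le \ell$) carries the configuration $\conf c_i$, the edges from layer $i-1$ to layer $i$ are the $\dc{\fl a_i}$-bounded flow edges, and a final sink absorbs everything sitting on $t$ in the last layer (all other states in the last layer are dead, contributing $0$ to the sink). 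By the \kl{max-flow min-cut theorem} on this network, the maximal value $\Phi(\fl w)$ equals the minimal \kl{capacity of a cut}.

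Next I would argue that a minimum cut in this layered network can be chosen to be "flat", i.e. to consist, for each layer $i$, of a subset $S_i \subseteq \states$ such that the cut edges are exactly those flow edges $(p,q)$ with $p \notin S_i$ or $q \notin S_{i-1}$ in the appropriate sense — more precisely, that there is an optimal cut given by a sequence of vertex subsets $\states = S_0 \supseteq$-ish $\cdots$, with $S_0$ forced to contain $s$ and $S_\ell$ forced to avoid $t$, and whose \kl{cost} is $\sum_{i=1}^{\ell} \kl{weight}_i(S_{i-1}, S_i)$ where $\kl{weight}_i(S,S')$ is the sum of $\fl a_i(p,q)$ over $p \in S$, $q \notin S'$ (using the convention that an $\omega$-entry contributes $\omega$, so such edges are never cut in a finite-cost cut). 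This is the standard reduction of min-cut in a series-parallel/layered graph to a shortest-path-like computation, and it is precisely what a \kl{distance automaton} reading $\fl a_1 \cdots \fl a_\ell$ can simulate: a state is the current subset $S_i \subseteq \states$, a transition on letter $\fl a_i$ from $S$ to $S'$ has cost $\kl{weight}_i(S, S')$ (which is $\omega$ if some cut edge has $\omega$-capacity), the initial state is the collection encoding "$s$ on the source side" (formally one picks $S_0 = \states$, or adds a dummy layer forcing $s \in S_0$), and the final-cost vector assigns $0$ to subsets $S$ with $t \notin S$ and $\omega$ otherwise. Then a run on $\fl w$ has \kl{value} equal to the cost of the corresponding flat cut, and $\sem{\aut_\Phi}(\fl w) = \min_\rho \val(\rho)$ is the minimum flat-cut cost, which by the previous step equals $\Phi(\fl w)$.

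The main obstacle I anticipate is the "flatness" step: justifying that some minimum cut of the layered network is induced layerwise by vertex subsets, and pinning down the exact convention at the source and the sink so that the forced membership constraints ($s$ always on one side in layer $1$, $t$ never on the source side in layer $\ell$) are correctly encoded without spurious off-by-one artifacts. One has to be careful that splitting each state $p$ in layer $i$ into nothing more than a node (there is no internal capacity) means a cut is genuinely a choice of which \emph{nodes} are on the source side, and that an optimal integral cut can be taken of this form; the cleanest route is probably to invoke the max-flow min-cut theorem directly in its node-cut / vertex-subset formulation for acyclic layered graphs, or to add zero-capacity bookkeeping so that the claim is literally the statement that min-cut = min over sequences of subsets. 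Once flatness is in hand, everything else — boundedness of $\aut_\Phi$ being equivalent to unboundedness of $\Phi$, and the state count being exactly $2^{|\states|}$ — is immediate, and combined with \cref{thm:boundedness-of-distance} this will yield \cref{thm:main-simple-sfp} (the extra exponential in \textsc{EXPSPACE} coming from the exponential blow-up in the number of states).
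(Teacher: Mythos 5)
Your proposal takes essentially the same route as the paper's proof: interpret a flow word $\bfl f \le \fl w$ as an $((s,0),t)$-flow in the layered graph, invoke the max-flow min-cut theorem, and build a distance automaton with state set $\pow{\states}$ that guesses a cut layer by layer --- and the flatness step you worry about is exactly what the paper's automaton encodes, since replacing any cut by the sets $S_i$ of layer-$i$ vertices still reachable from $(s,0)$ yields a layerwise cut of no larger cost. The only convention to fix is the initial state: it must be $\set{s}$ (equivalently, force $s \in S_0$ and let minimisation shrink $S_0$), not $S_0 = \states$, because charging edges that leave token-free layer-$0$ vertices, in particular $\omega$-capacity ones, can wrongly make the computed value infinite even though $\Phi(\fl w)$ is finite.
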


Theorem~\ref{thm:main-simple-sfp} follows by combining Theorem~\ref{thm:boundedness-of-distance} and Lemma~\ref{lem:distance}. 
The construction of a distance automaton for Lemma~\ref{lem:distance} relies on the standard max-flow min-cut theorem, which we expose now.

\paragraph*{Max-flow min-cut theorem.} \AP
In the classical (max-)flow problem the input is a "capacity" $\fl a \in (\Nats \cup \set{\omega})^{V \times V}$, a source vertex $s$, and a target vertex $t$.
The capacity $\fl a$ defines a graph over the set of states $V$: there is an edge from $u$ to $v$ 
if $\fl{a}(u,v) \neq 0$.
The objective is to compute the maximal "$(s,t)$-flow" respecting the "capacity" $\fl a$.
An ""$(s,t)$-flow"" is an element $\fl{f} \in \Nats^{V \times V}$ such that for all vertices $v$ except $s$ and $t$
we have a ""conservation law"":
\[
\sum_{v' \in V} f(v',v) = \sum_{v' \in V} f(v,v').
\]
The value of an "$(s,t)$-flow" is $|f| = \sum_{v \in V} f(s,v)$, which thanks to "conservation laws"
is also equal to $\sum_{v \in V} f(v,t)$.
We say that the "flow" $\fl{f}$ respects the "capacity" $\fl{a}$ if for all $v,v' \in V$ we have $\fl{f}(v,v') \le \fl{a}(v,v')$.

An ""$(s,t)$-cut"" is a set of edges $C \subseteq V \times V$ such that removing them disconnects $t$ from $s$. 
The ""cost of a cut"" is the sum of the weight of its edges: $|C| = \sum_{(v,v') \in C} \fl{a}(v,v')$.
The ""max-flow min-cut theorem"" states that the maximal value of an "$(s,t)$-flow" is exactly the minimal cost of an "$(s,t)$-cut" (\cite{FF56}). With this in hands, we are now ready to prove Lemma~\ref{lem:distance}.

\begin{proof}[Proof of Lemma~\ref{lem:distance}]
Let us fix a "capacity word" $\fl w = \fl w_1 \dots \fl w_\ell \in \Capa^*$.
Consider the finite graph $\G_w$ with vertex set $(\states \times [0,\ell]) \cup \{t\}$, where $t$ is an additionnal fresh vertex, and for all $i \in [1,\ell]$, 
an edge from $(q,i-1)$ to $(q',i)$ labelled by $\fl w_i(q,q')$, and an edge labelled by $b(q)$ from $(q, \ell)$ to $t$.
Then $\Phi(\fl w)$ is the maximal value of an "$((s,0),t)$-flow" in $\G_w$.
Thanks to the max-flow min-cut theorem, this is also the minimal cost of an "$((s,0),t)$-cut" in $\G_w$.

The "distance automaton" we construct guesses an $((s,0),t)$-cut and outputs its cost,
hence its value on a given word corresponds to the minimal cost "cut". 
To verify that the current run indeed corresponds to a "cut", it suffices to remember the set of reachable "states" from $(s,0)$ using a powerset construction. In the last layer, to obtain a "cut", one is forced to "cut" the edge from $(q,\ell)$ to $t$, which has cost $b(q)$, for all states $q$ such that $(q,\ell)$ is reachable from $(s,0)$.

The set of "states" of $\aut$ is $\pow{\states}$, the initial state is $\set{s}$, and the vector of final costs $\eta \in (\Nats \cup \{\omega\})^{\pow\states}$takes value $\sum_{q \in X} b(x)$ on subset $X \in \pow{\states}$.
The transition relation $\Delta$ is defined as follows: $(X, a, c, Y) \in \Delta$ 
if $Y \subseteq Z = \set{q' \in \states : \exists q \in X, a(q,q') \neq 0}$
and $c = \sum_{q \in X} \sum_{q' \in Z \setminus Y} a(q,q')$.
In words: from the set $X$ representing the set of reachable states from $(s,0)$ without the edges already cut,
the new set $Y$ is a subset of the set of reachable states from $X$ reading $a$,
obtained by removing a set of edges whose cost is accounted for in the current value of the run.

A run of the automaton $\aut$ corresponds to an "$((s,0),t)$-cut", and its value is the cost of the cut. 
Minimising over all runs yields the minimal cost of an "$((s,0),t)$-cut", so $\sem{\aut} = \Phi$ by the max-flow min-cut theorem.
\end{proof}

\section{Solution to the sequential flow problem}\label{sec:decidability}

This section builds on Lemma~\ref{lem:distance} in order to obtain decidability of the general "sequential flow problem". 
Let $\states$ be a finite set of "states", $\Capa$ a finite set of "capacities", $a$ a source capacity and $b$ a target capacity.
We let $\decFlows = \bigcup_{a \in \Capa} \dc{a}$ denote the downward closed set of flows,
$I = \dc{a}$ the source "ideal", and $\fin = \dc{b}$ the target "ideal".
We ask whether $I \subseteq \SFP(\Flows, \fin)$, that is, whether $\fin$ can be reached from any "configuration" of $I$ using only "flows" from $\Flows$.

\begin{theorem}\label{thm:main-sfp}
The "sequential flow problem" is decidable.
\end{theorem}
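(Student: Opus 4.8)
The plan is to reduce the general "sequential flow problem" to the simple case treated in Section~\ref{sec:decidability_simple}, at the cost of introducing the more powerful machinery of "regular cost functions" in place of bare "distance automata". By Lemma~\ref{lem:vjgl} and the fact that $\SFP$ distributes over unions in its second argument, it suffices to decide, given an "ideal" $I = \dc a$ and an "ideal" $\fin = \dc b$, whether $I \subseteq \SFP(\decFlows,\fin)$. Writing $a = (\omega$ on some coordinates, finite elsewhere$)$, the subtlety compared with the simple problem is twofold: the source is no longer concentrated on a single state $s$ but is a general "configuration ideal" $\dc a$ (so several sources, possibly with finite bounds), and dually the target "ideal" $\dc b$ may impose finite upper bounds on certain components of the final "configuration". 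The first issue is handled by a standard trick: add a fresh source state $s_\star$ with a "capacity" allowing $s_\star$ to distribute tokens to the $\omega$-coordinates of $a$ freely, and to the finite coordinates up to their bound; symmetrically add a fresh sink state $t_\star$ collecting from the coordinates of $b$, with bounded edges for the finite coordinates of $b$ and unbounded edges for its $\omega$-coordinates. Then $I \subseteq \SFP(\decFlows,\fin)$ becomes the question ``for all $n$, can $n$ tokens be routed from $s_\star$ to $t_\star$?'' — syntactically of the same shape as the "simple sequential flow problem", \emph{except} that the finite capacities force us to also control that not too many tokens leak.

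The real obstacle, and where "regular cost functions" enter, is the presence of \emph{finite} (non-$\omega$) bounds in $b$ — that is, coordinates on which the target "ideal" caps the number of tokens. In the simple problem every quantity we had to control was a ``$\ge n$'' constraint, which fits the $\min$-semiring of "distance automata"; here we must simultaneously ask that $\ge n$ tokens reach the designated $\omega$-targets \emph{and} that $\le b(q)$ tokens accumulate on each capped coordinate $q$. The latter is a boundedness-type constraint of the opposite polarity, so "distance automata" alone no longer suffice: I would instead build a cost automaton in the sense of the "theory of regular cost functions" (a $B$-automaton, or a hierarchical combination of distance and "desert automata") whose value on a "capacity word" $\fl w$ is, up to the usual $\alpha$-equivalence of cost functions, the largest $n$ such that $n$ tokens can be synchronised from $s_\star$ to $t_\star$ respecting $\fl w$ \emph{and} respecting all finite caps. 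Concretely, as in the proof of Lemma~\ref{lem:distance}, the automaton reads $\fl w$ left to right, maintains the powerset of reachable states (to witness a cut in the layered graph $\G_w$ of Lemma~\ref{lem:distance}), and uses counter actions: $\mathsf{ic}/\mathsf{r}$ actions to measure the min-cut separating $s_\star$ from $t_\star$ (the ``how many can get through'' quantity), together with additional counters that, at each position, must be reset often enough to certify that the flow crossing a capped coordinate never exceeds its bound — this is exactly the kind of dual condition that "desert automata" and the $B$/$S$-duality of the "theory of regular cost functions" are designed to express.

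Given such a cost automaton $\aut$, the answer to the "sequential flow problem" is \emph{Yes} iff $\sem{\aut}$ is unbounded over $\Capa^*$, and the "boundedness problem" for cost automata (equivalently, the limitedness problem for the corresponding $B$-automata, and for "desert automata") is decidable by the results of the "theory of regular cost functions" — this is the analogue of Theorem~\ref{thm:boundedness-of-distance} in the cost-function setting. The correctness argument has two directions, both mirroring Lemma~\ref{lem:distance} but now carried through the cost-function semantics: soundness, that a run of $\aut$ of value $\ge n$ genuinely yields a "flow word" $\bfl f \le \fl w$ synchronising $n$ tokens into $\fin$ (here the max-flow min-cut theorem of Section~\ref{sec:decidability_simple} is applied layer by layer in $\G_w$, with the extra capped edges treated as ordinary bounded edges); and completeness, that any such "flow word" induces a run of at least the corresponding value. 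Finally, to go from ``$\Phi$ unbounded'' back to ``$I \subseteq \SFP(\decFlows,\fin)$'' one uses, just as in the simple case, that a single unbounded-over-all-words supremum is equivalent to the ``for all $n$ there exists $w$'' phrasing of the problem, together with the reduction of the general downward-closed computation to ideal-inclusion via Lemma~\ref{lem:vjgl}. The step I expect to be genuinely delicate is the design of the cost automaton: getting a \emph{single} cost function whose boundedness faithfully encodes the conjunction of a ``many tokens arrive'' requirement and several ``few tokens leak'' requirements, while keeping the state space of size $2^{O(|\states|)}$ and staying inside the class for which limitedness is decidable, is the crux; everything else is bookkeeping on top of Lemma~\ref{lem:distance} and the max-flow min-cut theorem.
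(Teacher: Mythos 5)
Your reduction is unsound at the fresh-source step, and the flaw is not repairable by bookkeeping. The general problem asks whether $\dc a \subseteq \SFP(\decFlows,\fin)$, i.e.\ whether \emph{every} configuration below $a$ reaches $\fin$; since $\SFP(\decFlows,\fin)$ is downward closed, this is equivalent to asking that for every $n$ the single configuration $a_n$ (obtained from $a$ by putting $n$ tokens on \emph{each} $\omega$-coordinate and keeping the finite coordinates) reaches $\fin$ --- a simultaneous requirement on all coordinates of $a$. Routing $n$ tokens out of a fresh state $s_\star$ replaces this universal requirement by an existential choice of how to split the $n$ tokens among the coordinates, which is strictly weaker. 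Concretely, let $a$ have two $\omega$-coordinates $s_1,s_2$, let $\fin=\dc{(\omega\cdot t)}$, and let the capacities allow arbitrarily many tokens to move from $s_1$ to $t$ (and to stay on $t$) but give no edge leaving $s_2$: then any configuration with a token on $s_2$ is stuck outside $\fin$, so the answer to the sequential flow problem is No, yet in your modified instance every $n$ tokens can be sent from $s_\star$ through $s_1$ to the sink, so your reduction answers Yes. The same loss occurs for the finite coordinates of $a$, which your $s_\star$ is free to leave empty. Capacities only impose upper bounds, so no gadget on the edges out of $s_\star$ can force tokens onto every coordinate of $a$.

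This also shows that you have located the use of cost functions in the wrong place. Finite caps in the target $b$ are not the obstacle: already in Lemma~\ref{lem:distance} they are absorbed into the final costs $\eta(X)=\sum_{q\in X}b(q)$, i.e.\ they are ordinary finite capacities on the last layer of the graph, handled by the max-flow min-cut theorem inside a plain distance automaton. What genuinely requires the theory of regular cost functions is precisely the conjunction your trick dissolved: the paper writes $\Phi(w)$ as a maximum, over decompositions of each letter of $w$ and of $b$ into one summand per $\omega$-coordinate of $a$ plus one for the finite part $a_0$, of the \emph{minimum} of the values of distance automata (one instance of Lemma~\ref{lem:distance} per $\omega$-coordinate, and one automaton tracking the finitely many $a_0$-tokens). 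The inner minimum is a product of distance automata, but the outer maximum has the opposite polarity; this is exactly where Theorem~\ref{thm:distance-to-desert} is invoked to convert the product into a desert automaton computing the same cost function, after which the decomposition is guessed nondeterministically (desert automata maximise over runs) and boundedness is decided via Theorem~\ref{thm:boundedness-of-desert}. Your sketched counter automaton for the capped coordinates addresses neither the decomposition nor this max-of-min structure, so the construction of a single cost automaton computing $\Phi$ --- which you rightly identify as the crux --- is missing.
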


To prove Theorem~\ref{thm:main-sfp}, we will use so-called ""desert automata"". 
Dually to "distance automata", these non-deterministic automata output the maximal value over all runs. Formally, "desert automata" are weighted automata over the semi-ring $(\Nats \cup \{\omega\}, +, \max)$. They are defined as tuples $(\st, q_0, \delta, \eta)$, just as "distance automata" (see Section~\ref{sec:decidability_simple}), except for their semantics which is now given by
\[
\sem{\aut}(w) = \max_{\rho \text{ run over } w} \val(\rho).
\]
We say that $\aut$ recognises the function $\sem{\aut}$.

\AP \textbf{The theory of regular cost functions.}
The following duality theorem is crucial to our proof and is a central result of the theory of ""regular cost functions""~\cite{Colcombet13}, which is a set of tools for solving boundedness questions.
Since in the "theory of regular cost functions", when considering functions we are only interested in whether they are bounded or not, 
we consider functions ``up to boundedness properties''.
Concretely, this means that a ""cost function"" is an equivalence class of functions $\Sigma^* \to \Nats \cup \set{\omega}$,
with the equivalence being $f \intro*\cfeq g$ if there exists $\alpha : \Nats \to \Nats $ such that $f(w)$ is finite if and only if~$g(w)$ is finite, and in this case, $f(w)\leq \alpha(g(w))$ and  $g(w)\leq \alpha(f(w))$. This is equivalent to
stating that for all~$X\subseteq \Sigma^*$, $f$ is bounded over~$X$ if and only if~$g$ is bounded over~$X$.

\begin{thmC}[\cite{Colcombet13}]
\label{thm:distance-to-desert}
Let $\aut'$ be a "distance automaton".
There exists a "desert automaton" $\aut$ recognising the cost function $\sem{\aut'}$.
\end{thmC}

The boundedness problem for "desert automata" is decidable.

\begin{thmC}[\cite{Kirsten05}]\label{thm:boundedness-of-desert}
The "boundedness problem" for "desert automata" is decidable in $\textsc{PSPACE}$.
\end{thmC}

Our goal is to determine the boundedness of the function
\[
\begin{array}{lccl}
\Phi: & \Capa^* & \longrightarrow & \Nats \cup \{\omega\} \\
      & \fl w   & \longmapsto     & \sup \{n \in \Nats \mid \exists \bfl f, \bfl f \leq \fl w \text{ and } a_n \flowpath{\bfl f} F\}.
\end{array}
\]

\begin{lemma}\label{lem:desert}
The function $\Phi$ is computed by a "desert automaton".
\end{lemma}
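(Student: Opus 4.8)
The plan is to recast $\Phi$ as a cut‑capacity function of a family of layered flow networks, in the spirit of the proof of Lemma~\ref{lem:distance}, and then to realise it as a desert automaton by combining a powerset construction with the duality Theorem~\ref{thm:distance-to-desert}. First I would fix a capacity word $\fl w = \fl w_1 \cdots \fl w_\ell \in \Capa^*$ and build the layered graph $\mathcal G_{\fl w}$ on vertices $(\states \times [0,\ell]) \cup \{S,T\}$, with an edge $(q, i{-}1) \to (q', i)$ of capacity $\fl w_i(q,q')$, an edge $S \to (q,0)$ of capacity $a(q)$, and an edge $(q,\ell) \to T$ of capacity $b(q)$ (all capacities in $\Nats \cup \{\omega\}$). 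A straightforward induction on $\ell$ shows that, for a fixed $n$, there is a flow word $\bfl f \le \fl w$ with $a_n \flowpath{\bfl f} \conf c \in \fin$ (the configuration $a_n$ being $a$ with every $\omega$ changed to $n$) if and only if $\mathcal G_{\fl w}$, with every source capacity $\omega$ replaced by $n$, carries an $S$–$T$ flow saturating all edges leaving $S$; by the max‑flow min‑cut theorem this happens precisely when every $S$–$T$ cut of $\mathcal G_{\fl w}$ has capacity at least $\lvert a_n \rvert = \sum_q a_n(q)$. Hence $\Phi(\fl w)$ is the largest such $n$.

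I would then factor out the dependence on $n$. Put $P = \{q : a(q) = \omega\}$, $k = \lvert P \rvert$ and $C_0 = \sum_{q \notin P} a(q)$, so that $\lvert a_n \rvert = kn + C_0$; every $S$–$T$ cut $C$ has capacity $j_C \cdot n + d_C$, where $j_C$ counts the edges $S \to (q,0)$, $q \in P$, lying in $C$, and $d_C$ is independent of $n$. Setting $D_j(\fl w) = \min\{ d_C : j_C = j \}$ for $j \in \{0,\dots,k\}$, the criterion becomes: $\Phi(\fl w) \ge n$ iff $D_j(\fl w) \ge (k-j)n + C_0$ for all $j$. One verifies that $D_j(\fl w) \ge D_k(\fl w)$ for every $j$ — the edges of a cut other than the $\omega$‑source edges already separate the finitely‑valued sources from $T$ — and that $D_k(\fl w) \le C_0$ always, so that $\Phi(\fl w)$ is nonzero only when $\fl w$ lies in $R = \{ \fl w : D_k(\fl w) = C_0 \}$, equivalently only when $a_0$ reaches $\fin$ using $\fl w$; moreover, on $R$, $\Phi(\fl w)$ agrees, up to a boundedness‑preserving distortion $\alpha$ absorbing the floor, the shift by $C_0$ and the divisions by $k-j$, with the min‑cut‑type quantity $\min_{0 \le j < k} D_j(\fl w)$. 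Since $a_0$ has only $C_0$ tokens, the configurations reachable from it form a finite set, so $R$ is a regular language.

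For the automaton I would note that each map $\fl w \mapsto D_j(\fl w)$ is a min‑cut quantity in $\mathcal G_{\fl w}$ and is computed by a distance automaton with state set $\pow\states$, via the powerset construction of Lemma~\ref{lem:distance}: the initial state records the support of the finitely‑valued part of $a$ together with $P$ minus the guessed set of cut $\omega$‑source edges, and the vector of final costs is $X \mapsto \sum_{q \in X} b(q)$. Guessing $j$ up front and taking a minimum gives a single distance automaton for $\fl w \mapsto \min_{0 \le j < k} D_j(\fl w)$; restricting it to $R$ by a product with the finite automaton for $R$, and applying Theorem~\ref{thm:distance-to-desert}, yields a desert automaton $\aut$. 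In $\aut$ a capacity word outside $R$ has no run and therefore receives value $0$, which is exactly the value $\Phi$ takes there, whereas a distance automaton would assign it $\omega$; this is why one passes to the max‑over‑runs semantics. Thus $\sem{\aut} \cfeq \Phi$, and since only the boundedness of $\Phi$ matters, this proves the lemma. The degenerate case $k = 0$, where $\dc a$ is finite, is immediate: there $\Phi$ equals $\omega$ on the regular set of capacity words routing $a_0$ to $\fin$ and $0$ elsewhere.

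The step I expect to be the crux is the passage from $\Phi$ to the $D_j$'s, and in particular the interplay between the source ideal $\dc a$ and the target ideal $\dc b$: unlike in the simple problem, $\Phi$ is not itself a single min‑cut function, because when the finitely‑valued source mass cannot be pushed into $\fin$ at all the value $\Phi(\fl w)$ collapses to $0$ while $\min_{0 \le j < k} D_j(\fl w)$ may remain large. Handling this collapse — equivalently, preventing infeasible capacity words from inflating $\sup_{\fl w} \Phi(\fl w)$ — is what makes it natural to work with desert rather than distance automata and to route the argument through Theorem~\ref{thm:distance-to-desert}. A secondary, routine, point is checking that the distortion $\alpha$ above is genuinely boundedness‑preserving, so that $\aut$ recognises the cost function $\Phi$ and not merely some function with the same behaviour on $R$.
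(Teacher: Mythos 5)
Your route is genuinely different from the paper's: instead of splitting the source $a$ into its $\omega$-states plus a finitary part and letting a desert automaton guess, letter by letter, a decomposition of the capacity word (which is why the paper takes a product of distance automata, dualises via Theorem~\ref{thm:distance-to-desert}, and only then adds an outer max), you analyse the $S$--$T$ cuts of the layered graph directly, classify them by the number $j$ of $\omega$-source edges they contain, and reduce $\Phi$ on the regular set $R$ to $\min_{0\le j<k}D_j$ up to a boundedness-preserving distortion. That core analysis is correct: $D_j\ge D_k$, $D_k\le C_0$, $D_k=C_0$ iff $a_0$ reaches $\fin$, and on $R$ one indeed gets $\Phi=\min_{j<k}\lfloor (D_j-C_0)/(k-j)\rfloor$, which is $\cfeq$-equivalent to $\min_{j<k}D_j$ with $\alpha(m)=k(m+1)+C_0$. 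The $D_j$ are distance-automaton computable by the powerset construction of Lemma~\ref{lem:distance} (one small repair: your automaton must also be allowed to cut the finite-source edges and charge $a(q)$ for them, say on the first transition, since an optimal cut may use them).

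The gap is exactly at the step you yourself flag as the crux, the collapse to $0$ outside $R$. You restrict the distance automaton for $\min_{j<k}D_j$ to $R$ and then apply Theorem~\ref{thm:distance-to-desert}, asserting that the resulting desert automaton has no run, hence value $0$, on words outside $R$. This does not follow, and in general it is false. The restricted distance automaton takes value $\omega$ on every $w\notin R$ (a minimum over an empty set of runs, or over runs with final cost $\omega$), and Theorem~\ref{thm:distance-to-desert} only guarantees a desert automaton recognising the \emph{same} cost function; since $\cfeq$ requires finiteness to agree pointwise, that automaton is forced to take infinite values on all of $\Capa^*\setminus R$ --- you have no control over its runs, and you cannot undo this afterwards at the desert level (with max-over-runs semantics a final cost cannot annul transition costs already accumulated, and the set of words admitting a run is prefix-closed while $R$ need not be). Consequently $\sem{\aut}$ is unbounded as soon as $R\neq\Capa^*$, even when $\Phi$ is bounded, so the reduction to boundedness fails precisely on the infeasible capacity words you were trying to neutralise. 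The repair is to handle the collapse \emph{before} dualising, as the paper does with its finitary component $\aut_\finite$: combine, by an initial nondeterministic choice resolved by the min semantics, your automaton for $\min_{j<k}D_j$ with a distance automaton computing $\omega$ on $R$ and $0$ off $R$ (a DFA for $R$ with zero transition costs and final costs $\omega$ or $0$). The combined distance automaton recognises a function $\cfeq$-equivalent to $\Phi$, and a single application of Theorem~\ref{thm:distance-to-desert} then yields the desired desert automaton; with that rearrangement your argument goes through, and in fact shows $\Phi$ is already a regular cost function of distance type, without any decomposition step.
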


\begin{proof}
We let $S = \{s \in \st \mid a(s)= \omega\}$, and for $n \in \Nats \cup \{\omega\}$, let $a_n$ denote the vector obtained from $a \in (\Nats \cup \{\omega\})^\st$ by replacing occurrences of $\omega$ by $n$. Note that $a = a_0 + \sum_{s \in S} \omega \cdot s$.
The automaton recognising $\Phi$ will non-deterministically decompose a flow into synchronous flows:
\begin{itemize}
	\item a finitary part bringing $a_0$ to the target, and
	\item for each $s \in S$, an instance of the "simple sequential flow problem" with source $s$.
\end{itemize}

We now formalise the notion of decomposition.
Recall that $\Capa$ is a finite set of capacities, let $k = \max \set{n \in \Nats : \exists a \in \C, q \in \st, a(q) = n}$
the largest integer appearing in $\Capa$: then $\Capa \subseteq ([0,k] \cup \set{\omega})^\st$.

Let $\Jc = S \cup \{\finite\}$, we consider the alphabets 
\[
\Sigma_2 = \left( \left( [0,k] \cup \set{\omega} \right)^{\st \times \st} \right)^{\Jc}\ ;\ 
\Sigma_1 = \left( \left( [0,k] \cup \set{\omega} \right)^{\st} \right)^{\Jc} \ ;\
\Sigma = \Sigma_2 \cup \Sigma_1.
\]
For a letter $a^{\Jc} = (a^{j})_{j \in \Jc}$ in $\Sigma_2$ we say that $a^{\Jc}$ is a ""decomposition"" of $a \in \Capa$
if $\sum_{j \in \Jc} a^{j} = a$, and similarly a letter $b^{\Jc} \in \Sigma_1$ is a "decomposition" of the target ideal $b$ if $\sum_{j \in \Jc} b^{j} = b$.
We extend "decompositions" to words by considering each position: 
a word $w^{\Jc} \in \Sigma_w^*$ is a "decomposition" of $w \in \C^*$ if for each position $i$,
the letter $w^{\Jc}_i$ is a "decomposition" of $w_i$.

We start with constructing the following automata.
\begin{itemize}
	\item the "distance automaton" $\aut_\finite$ has semantics
\[
\sem{\aut_\finite}(w^{\text{fin}} b^{\text{fin}}) = \begin{cases}
\omega \text{ if } \exists \bfl f,\ \bfl f \leq w^{\text{fin}} \text{ and } a_0 \flowpath{\bfl f} b^{\text{fin}}, \\
0 \text{ otherwise}.
\end{cases}
\]
Every transition has cost $0$ or $\omega$; the automaton non-deterministically guesses the path of each of the finitely many tokens.
The set of states is $\Nats^{p}$ where $p = \sum_{q \in \states} a_0(q)$.

	\item For each $s \in S$, the "distance automaton" $\aut_s$ computes
\[
\sem{\aut_s}(w^{s} b^{s}) = \sup \set{n \in \Nats \mid \exists \bfl f,\ \bfl f \leq w^{s}  \text{ and }  n \cdot s \flowpath{\bfl f} b^{s}},
\]
and its existence is guaranteed by Lemma~\ref{lem:distance}.
\end{itemize}
Note that 
\[
\Phi(w) = \max_{\substack{w^{\Jc}, b^{\Jc} \text{ decompositions} \\ \text{of } w, b}} \min_{j \in \Jc}\ \sem{\aut_j}(w^j b^j).
\]

To obtain a "desert automaton" recognising the cost function $\Phi$, we need three more steps.
The first is to construct a "distance automaton" $\aut'$ computing
\[
\sem{\aut'}(w^{\Jc} b^{\Jc}) = \min_{j \in \Jc}\ \sem{\aut_j}(w^{j} b^{j}),
\]
which is obtained as a direct product of the automata $(\aut_j)_{j \in \Jc}$.
The second step is to apply Theorem~\ref{thm:distance-to-desert} to obtain a "desert automaton" $\aut_d$ recognising the cost function $\sem{\aut'}$.
The third step is to construct a "desert automaton" $\aut$ guessing the decomposition:
\[
\sem{\aut}(w) =\max_{\substack{w^{\Jc}, b^{\Jc} \text{ decompositions} \\ \text{of } w, b}} \sem{\aut_d}(w^{\Jc} b^{\Jc}).
\]
By construction, $\aut$ recognises the cost function $\Phi$.
\end{proof}

Theorem~\ref{thm:main-sfp} follows by combining Theorem~\ref{thm:boundedness-of-desert} and Lemma~\ref{lem:desert}.

\section{Lower bound for the simple sequential flow problem}\label{sec:complexity}

We now provide a PSPACE lower bound for "simple SFP", by reduction from "NFA intersection emptiness"~\cite{Kozen77}.

\begin{problem}[""NFA intersection emptiness""]
\AP
Given $k$ non-deterministic finite automata $\aut_1, \dots \aut_k$, decide whether $\bigcap_i \L(\aut_i) = \varnothing$.
\end{problem}

Recall that the instance discussed in Figure~\ref{fig:flow1} is an instance of "simple SFP". 
It can be used as a gadget to encode intersection of two NFAs, essentially by encoding in action $c$ a synchronous run in both NFA, one from $q_2$ to $q_1$ and one from $q_3$ to $q_3$. 

\paragraph*{The ""gadget"" $\gadg k$}
\AP
We now introduce and study a "gadget" $\gadg k$, which generalizes the one from Figure~\ref{fig:flow1} and is tailored to encode synchronous runs of $k$ NFA. It is illustrated in Figure~\ref{fig:flow3}.

\begin{figure}[!ht]
    \centering
    \def\svgwidth{\columnwidth}
    \resizebox{0.6\textwidth}{!}{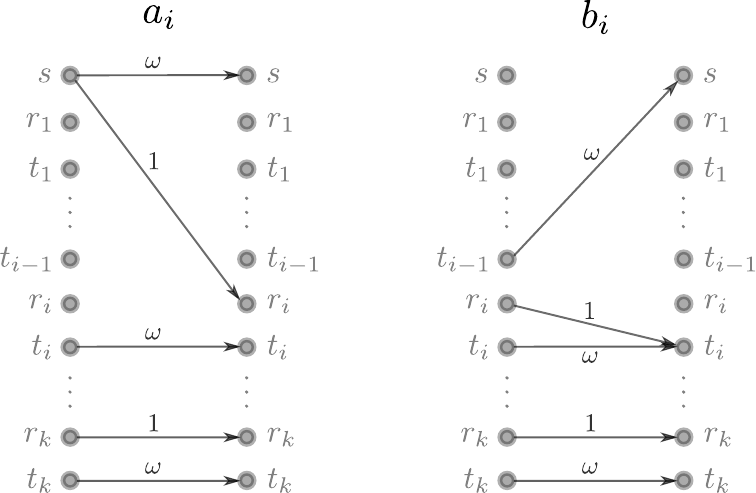}
    \caption{The capacities $a_i$ and $b_i$ which define the gadget $\gadg k$}
    \label{fig:flow3}
\end{figure}

\begin{itemize}
\item There are $2k+1$ "states", $s, r_1,t_1,r_2,t_2, \dots, r_k,t_k$. The "source" is $s$ and the "target" is $t_k$. Intuitively, to move a "token" from $s$ to $t_i$, one must first move it into $r_i$, then recursively move all "tokens" to $t_{i-1}$, before the "token" in $r_i$ is allowed to move to $t_i$, which drives all other "tokens" back to the "source" $s$.
\item There are $2k$ "capacities" $a_1,b_1,\dots, a_k,b_k$. The role of $a_i$ is to allow to move a "token" to $r_i$, provided all "tokens" are in $\{s, t_i, r_{i+1}, t_{i+1}, \dots, r_k, t_k\}$. The role of $b_i$ is to allow a "token" to move from $r_i$ to $t_i$, provided all "tokens" are in\footnote{For this definition when $i=1$, we identify $s$ and $t_{i-1}$.} $\{t_{i-1}, r_i, t_i, \dots, r_k, t_k\}$. Additionally, $b_i$ resets all "tokens" from $t_{i-1}$ back to $s$. Formally, we have
\[
a_i(q,q')=\begin{cases}
\omega \text{ if } q=q' \in \{s, t_i, t_{i+1}, \dots, t_k\} \\
1 \text{ if } q=q' \in \{\r_{i+1}, \dots, r_k \} \text{ or } (q,q')=(s,r_i) \\
0 \text{ otherwise},
\end{cases}
\]
and
\[
b_i(q,q')=\begin{cases}
\omega \text{ if } q=q' \in \{t_i, t_{i+1}, \dots, t_k\} \text{ or } (q,q') = (t_{i-1},s) \\
1 \text{ if } q=q' \in \{\r_{i+1}, \dots, r_k \} \text{ or } (q,q')=(r_i,t_i) \\
0 \text{ otherwise}.
\end{cases}
\]

\end{itemize}

Towards introducing our reduction from "NFA intersection", we study the behaviour of $\gadg k$. Let us first see that it is a positive instance of "simple SFP".

\begin{lemma}\label{lem:existsword}
For all $n$, there exists a "word of capacities" $w^{(n)} \in \Capa^*$ and a "flow word" $\bfl f^{(n)} \leq w^{(n)}$ such that $n \cdot s \flowpath{\bfl f^{(n)}} n \cdot t_k$.
\end{lemma}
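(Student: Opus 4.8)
The plan is to prove a stronger statement by induction on the level $j \in \{0,1,\dots,k\}$, which I will call $R(j)$ and which carries enough \emph{spectator} tokens to make the recursion go through: for every $n \ge 0$, every tuple $(p_j,p_{j+1},\dots,p_k) \in \Nats^{k-j+1}$ of spectator counts, and every subset $Y \subseteq \{r_{j+1},\dots,r_k\}$, there exist a "capacity word" $w$ over $\{a_1,b_1,\dots,a_j,b_j\}$ and a "flow word" $\bfl f \le w$ such that
\[
n \cdot s + \sum_{i=j}^{k} p_i \cdot t_i + \sum_{r \in Y} r \ \flowpath{\bfl f}\ n \cdot t'_j + \sum_{i=j}^{k} p_i \cdot t_i + \sum_{r \in Y} r ,
\]
where $t'_j = t_j$ for $j \ge 1$ and $t'_0 = s$. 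The lemma is then the instance $R(k)$ with all $p_i = 0$ and $Y = \varnothing$, which produces the required $w^{(n)}$ and $\bfl f^{(n)}$.

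The base case $R(0)$ is immediate: over the empty capacity alphabet only the empty "flow word" is available, and it witnesses the identity, which is exactly what $R(0)$ asks since $t'_0 = s$. For the inductive step I fix $j \ge 1$, assume $R(j-1)$, and realise the transformation by iterating $n$ times the block ``play $a_j$ once, then apply $R(j-1)$, then play $b_j$ once'', exactly as the informal description of the gadget suggests. I will maintain the invariant that at the start of the $m$-th block ($1 \le m \le n$) the configuration is $(n-m+1) \cdot s + (p_j+m-1) \cdot t_j + \sum_{i=j+1}^{k} p_i \cdot t_i + \sum_{r \in Y} r$; in particular $t_{j-1}$ and $r_1,\dots,r_j$ are then empty. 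The three moves of a block are checked against the two case definitions of the capacities. Applying $a_j$ sends one token along the capacity-$1$ edge $s \to r_j$ and self-loops everything else: this is supported because $a_j$ has unbounded self-loops on $s,t_j,\dots,t_k$ and capacity-$1$ self-loops on $r_{j+1},\dots,r_k$, which covers all current tokens. Then $R(j-1)$ is invoked with $n-m$ tokens on $s$, zero spectators on $t_{j-1}$, counts $(p_j+m-1,p_{j+1},\dots,p_k)$ on $t_j,\dots,t_k$, and spectator set $Y \cup \{r_j\} \subseteq \{r_j,\dots,r_k\}$; this is legitimate because every $a_i,b_i$ with $i \le j-1$ has unbounded self-loops on $\{t_i,\dots,t_k\} \supseteq \{t_j,\dots,t_k\}$ and capacity-$1$ self-loops on $\{r_{i+1},\dots,r_k\} \supseteq \{r_j,\dots,r_k\}$, which is precisely what those spectators need, and it leaves the $n-m$ tokens on $t_{j-1}$ (on $s$ if $j=1$). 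Finally $b_j$ moves the $r_j$-token along the capacity-$1$ edge $r_j \to t_j$, drains $t_{j-1}$ into $s$ along the unbounded edge $t_{j-1} \to s$ (a self-loop on $s$ when $j=1$), and self-loops the rest, which $b_j$ supports. This restores the invariant for the next block, so after $n$ blocks one reaches $n \cdot t_j + \sum_{i=j}^{k} p_i \cdot t_i + \sum_{r \in Y} r$; concatenating the capacity words and flow words of the $n$ blocks (inserting the empty word for each call to $R(0)$) yields the required $w \in \{a_1,b_1,\dots,a_j,b_j\}^*$ and $\bfl f \le w$.

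The step I expect to be the main obstacle is precisely this bookkeeping, which is also what forces the strengthening to $R(j)$ instead of the bare statement ``$n$ tokens on $s$ reach $n$ tokens on $t_k$''. One must keep track that the low states $t_{j-1}$ and $r_1,\dots,r_j$ are empty at the beginning of every block (so that $b_j$ genuinely clears $t_{j-1}$ and the lone token on $r_j$), that the tokens piling up on $t_j$ across successive blocks are never moved and never block a flow, and --- most delicate --- that each self-loop used to ``park'' a spectator token, namely capacity $1$ on each $r$-position and unbounded capacity on $s$ and on each $t$-position, is available in \emph{every} capacity applied throughout the entire recursive sub-run generated by $R(j-1)$, not merely at its boundaries. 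Once the spectator interface of $R(j)$ is designed to record exactly these tokens, every individual verification reduces to reading off the two case definitions of $a_i$ and $b_i$.
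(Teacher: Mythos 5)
Your proof is correct and follows essentially the same route as the paper's: an induction on the gadget level with a strengthened statement that keeps track of the spectator tokens parked on the higher $t$- and $r$-states, yielding the same nested capacity word of the form $a_j(\cdots)b_j$ iterated. The only difference is packaging: the paper's inductive statement moves a single token from $s$ to $t_i$ (requiring exactly one token on each of $r_{i+1},\dots,r_k$) and is applied $n$ times, whereas yours moves all $n$ tokens at once with an arbitrary spectator subset $Y$ --- a cosmetic reformulation of the same argument.
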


\begin{proof}
We prove by induction over $i \in \{1, \dots, k\}$ that from any "configuration" $c$ such that $c(s)=n, c(r_1)=c(t_1)= \dots = c(r_{i-1})=c(t_{i-1})=c(r_i)=0$, $c(r_{i+1})=c(r_{i+2}) = \dots = c(r_k)=1$, and $c(t_i), c(t_{i+1}) \dots c(t_k)$ are arbitrary, there exists a "capacity word" $w \in \{a_1,b_1, \dots, a_i,b_i\}^*$ and a "flow word" $\bfl f \leq w$ such that $c \flowpath{\bfl f} c'$, where $c'$ is identical to $c$ except for a "token" which has moved from $s$ to $t_i$. For $i=0$, the result is easily obtained with $w=a_1 b_1$. For the induction step, note that "capacities" from $\{a_1, b_1, \dots, a_{i-1}, b_{i-1}\}$ allow for "flows" which leave the positions $t_i,r_{i+1},t_{i+1},r_{i+2}, \dots, r_k, t_k$ unchanged (with value 1 at states $r_j$ and arbitrary at states $t_j$). We first use "capacity" $a_i$ to move one "token" from $s$ to $r_i$, then apply the induction hypothesis $n-1$ times to move all remaining "tokens" to $t_{i-1}$, and finally use "capacity" $b_k$ to reach the wanted "configuration" $c'$. This concludes the proof of the induction.

To prove the lemma, it suffices to apply this inductive statement $n$ times for $i=k$, to move each "token" from $s$ to $t$. Unravelling the inductive structure of the proof yields that the "word of capacities" $a_k(a_{k-1}(\dots(a_1 b_1)^{n-k+1}\dots)^{n-2} b_{k-1})^{n-1} b_k $ "goes from" $n \cdot s$ to $(n-1) \cdot s + t_k$.
\end{proof}

Note that the "capacity word" from the proof above has $a_k a_{k-1} \dots a_1$ as a prefix: the "strategy" we use starts by moving one "token" in each $r_i$, and then applies "capacity" $b_i$ which is able to \emph{synchronously} move each of these single "tokens" from $r_i$ back to $r_i$. This behaviour in $\gadg k$ is crucial in our reduction, and formalized now.

\begin{lemma}\label{lem:synchronize}
Let $n \geq k$, and $w \in \{a_1,b_1, \dots, a_k, b_k\}^*$ be a "word of capacities" of size $N$, and $\bfl f = f_1 \dots f_N$ a "flow word" $\leq w$ such that $n \cdot s \flowpath{\bfl f} n t_k$. Let $c_l=\pre(f_l)$ for $l \in \{1, \dots, N\}$. Then there exists an index $l$ such that for all $i \in \{1, \dots, k\}$, $c_l(r_i)=1$.
\end{lemma}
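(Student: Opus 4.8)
The plan is to prove the statement by induction on $k$. The guiding intuition is that a token can enter the target $t_k$ only through the edge $(r_k,t_k)$, which occurs solely in the capacity $b_k$, and that right before this happens for the first time the run must already have moved all the other tokens from $s$ to $t_{k-1}$ using only the lower-level capacities $a_1,b_1,\dots,a_{k-1},b_{k-1}$, while one token sits parked on $r_k$; this ``sub-run'' is precisely an instance of the statement for the gadget $\gadg{k-1}$ with $n-1$ tokens.

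I would begin by recording two elementary facts, read off directly from the definitions of $a_i$ and $b_i$. First, in every capacity the total capacity of the edges leaving a state $r_i$ is at most $1$: the only edges out of $r_i$ are the self-loop $(r_i,r_i)$, present in $a_j,b_j$ for $j<i$, and the edge $(r_i,t_i)$, present in $b_i$. Consequently $c_l(r_i)\in\{0,1\}$ for every $l$ and every $i$. Second, the only edges adjacent to $r_k$ are $(s,r_k)$ in $a_k$, $(r_k,t_k)$ in $b_k$, and the self-loop in $a_j,b_j$ for $j<k$; and the only edge entering $t_k$ from a different state is $(r_k,t_k)$. The base case $k=1$ is now immediate: all $n\ge 1$ tokens end on $t_1$, which can be entered only via $(r_1,t_1)$, an edge appearing only in flows below $b_1$, so some $f_l$ uses it, whence $c_l(r_1)\ge 1$ and hence $c_l(r_1)=1$ by the first fact.

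For the inductive step, let $k\ge 2$ and assume the statement for $\gadg{k-1}$. Let $l_k$ be the least index with $f_{l_k}(r_k,t_k)\ge 1$; it exists since all $n$ tokens eventually reach $t_k$. Then $f_{l_k}\le b_k$ and $c_{l_k}(r_k)=1$. Before step $l_k$ no token lies on $t_k$, and $c_\bullet(r_k)$ cannot decrease (a token leaving $r_k$ must use the edge $(r_k,t_k)$, whose first use is $l_k$); since $c_1(r_k)=0$ there is a unique index $l_k'<l_k$ with $c_{l_k'}(r_k)=0$ and $c_{l_k'+1}(r_k)=1$, this step uses $a_k$, and $c_l(r_k)=1$ for all $l\in\{l_k'+1,\dots,l_k\}$. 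Inspecting the edge sets of $a_k$ and $b_k$ and using that $t_k$ is empty throughout, one computes $c_{l_k'+1}=(n-1)\cdot s+r_k$ and $c_{l_k}=(n-1)\cdot t_{k-1}+r_k$; since $k\ge 2$ (so $t_{k-1}\ne s$) and $n-1\ge 1$, this forces $l_k'+1<l_k$, so the intermediate interval is nonempty. For each $l\in\{l_k'+1,\dots,l_k-1\}$ the flow $f_l$ can be neither $\le a_k$ (there is no edge out of $r_k$ in $a_k$, yet $c_l(r_k)=1$) nor $\le b_k$ (that would use $(r_k,t_k)$ before $l_k$); hence $f_l\le a_i$ or $f_l\le b_i$ for some $i<k$, and because $c_l(r_k)=c_{l+1}(r_k)=1$ and $t_k$ stays empty, the only edge of $f_l$ touching $\{r_k,t_k\}$ is the self-loop on $r_k$ carrying the parked token.

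Restricting each $c_l$ for $l\in\{l_k'+1,\dots,l_k\}$ and each $f_l$ for $l\in\{l_k'+1,\dots,l_k-1\}$ to the coordinates $s,r_1,t_1,\dots,r_{k-1},t_{k-1}$ therefore yields a flow word below a capacity word over $\{a_1,b_1,\dots,a_{k-1},b_{k-1}\}$ --- whose restrictions to these coordinates are exactly the capacities of $\gadg{k-1}$ --- that takes $(n-1)\cdot s$ to $(n-1)\cdot t_{k-1}$ in $\gadg{k-1}$. Since $n-1\ge k-1$, the induction hypothesis provides an index $l\in\{l_k'+1,\dots,l_k-1\}$ with $c_l(r_i)=1$ for all $i<k$; together with $c_l(r_k)=1$ this gives a step with $c_l(r_i)=1$ for all $i\in\{1,\dots,k\}$, as required. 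The one genuinely bookkeeping-heavy point --- and the main obstacle --- is this last step: verifying that over the interval $(l_k',l_k)$ no flow uses $a_k$ or $b_k$, that exactly one token stays parked on $r_k$ while $t_k$ remains empty, and that the restricted capacities coincide with those of $\gadg{k-1}$; all of this is a matter of carefully reading off the explicit edge sets of $a_i$ and $b_i$.
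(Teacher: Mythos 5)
Your proof is correct and is essentially the paper's own argument: the paper strengthens the statement and inducts on the level $i$ inside the fixed gadget $\gadg k$, carrying the parked tokens on $r_{i+1},\dots,r_k$ along in the inductive statement, whereas you induct on the gadget size $k$ and project the segment between the forced $a_k$-step and the first $b_k$-step onto the sub-gadget $\gadg{k-1}$. The core decomposition --- first occupation of $r_k$ via $a_k$, an intermediate run over $\{a_1,b_1,\dots,a_{k-1},b_{k-1}\}$ with one token parked on $r_k$ and $t_k$ empty, then the first use of $b_k$ --- together with the pinned-down boundary configurations $(n-1)\cdot s+r_k$ and $(n-1)\cdot t_{k-1}+r_k$, coincides with the paper's analysis.
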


\begin{proof}
We prove by induction over $i \in \{1, \dots, k\}$ that for any $n \geq i$, a "flow word" $\bfl f = f_1 \dots f_N$ smaller than some $w \in \{a_1, b_1, \dots, a_i, b_i\}^*$ "going from" $n \cdot s + r_{i+1} + \dots + r_k$ to $n t_i + r_{i+1} + \dots + r_k$ must have some $c_l = \pre(f_l)$ satisfying for all $j$, $c_l(r_j) = 1$.
For $i=1$, this is true for $c_1=n \cdot s + r_1 + \dots + r_k$. Let $i>0$ and assume the result proved for smaller $i$'s. For all $l \in \{1, 2, \dots, N-1\}$, let $c_l = \pre(f_l)$ (which is also equal to $\post(f_{l-1})$ for $l \geq 2$). Note that since $f_l \leq w_l \in \{a_1, b_1, \dots, a_i, b_i\}$ by hypothesis, for all $j \geq i+1$, since $c_1(r_j)=1$ and since the $r_j$ is the unique "state" $q$ such that $w_l(q,r_j)=1$, an easy induction concludes that $c_l(r_j)=1$ and $f_l(r_j,r_j)=1$. Let $l_{\righ} = \min_l(c_l(t_i)>0)$. Since $c_{l_{\righ}-1}(t_i)=0$, and $w_l \in \{a_1,b_1, \dots, a_i,b_i\}$, it must be that $w_{l_\righ -1} = b_i$ and hence $c_{l_\righ -1} = (n-1) \cdot t_{i-1} + r_i + r_{i+1} + \dots + r_k$. Let $l_{\lef} = \min_l(c_l(r_i)>0) \in \{2, 3, \dots, l_{\righ -1}\}$.
Since $c_{l_\lef -1}(r_1)=0$ and $w_{l_\lef -1} \in \{a_1,b_1, \dots, a_i,b_i\}$, it must be that $w_{l_\lef -1} = a_i$ and hence $c_{l_\lef} = n \cdot s + r_i + r_{i+1} + r_{i+2} + \dots + r_k$. Now, an easy induction concludes that for all $l \in \{l_\lef, l_\lef +1, \dots, l_\righ -1\}$, we have $c_l(r_i)=1$, so it must be that $w_l \in \{a_1, b_1, \dots, a_{i-1}, b_{i-1}\}$. We conclude by invoking the induction hypothesis between $l_\lef$ and $l_\righ -1$.
\end{proof}

We now have a sufficiently good understanding of the "gadget" $\gadg k$ to present our lower bound.

We fix $k$ finite non-deterministic automata $\aut_1=(\states_1, I_1,F_1,\Delta_1), \dots \aut_k=(\states_k, I_k, F_k, \Delta_k)$ on a common alphabet $\Sigma$. Without loss of generality $\states_1, \dots, \states_k$ are disjoint sets and $\Delta_i \subseteq \states_i \times \Sigma \times \states_i$. To encode intersection of $\aut_1, \dots, \aut_k$ in "simple SFP", we add to $\gadg k$ a copy of each automaton, an additionnal "state" $r'_1$, "capacities" $c_\sigma$ for each $\sigma \in \Sigma$ and two "capacities" $\sr$ and $\fr$. The idea is as follows: to proceed from $r_1 + r_2 + \dots + r_k$ towards $t_1 + r_2 + \dots + r_k$, one now has to first go to $r_1' + r_2 + \dots + r_k$, which requires each of the $k$ tokens in $r_1, \dots, r_k$ to synchronously follow a run in $\aut_1, \dots, \aut_k$ respectively. Formally, we let $\states=\{s, r_1, t_1, \dots, r_k, t_k, r_1'\} \cup Q_1 \cup \dots \cup Q_k$, $\Capa = \{a_1, b_1, \dots, a_k, b_k, \sr, \fr\} \cup \{c_\sigma, \sigma \in \Sigma\}$. We let $a_i$ and $b_i$, for $i \in \{1, \dots, n\}$ be defined just as previously, except for $b_1$ which is now defined as
\[
b_1(q,q')=\begin{cases}
\omega \text{ if } q=q' \in \{s, t_1, \dots, t_k\}, \\
1 \text{ if } q=q' \in \{r_2, \dots, r_k \} \text{ or } (q,q')=(r'_1,t_1), \\
0 \text{ otherwise}.
\end{cases}
\]
Moreover, the "capacity" $c_\sigma$ is given by
\[
c_\sigma(q,q')=\begin{cases}
\omega \text{ if } q=q' \in \{s, t_1, \dots, t_k\}, \\
1 \text{ if for some } i \in \{1, \dots, k\}, (q, \sigma, q') \in \Delta_i, \\
0 \text{ otherwise},
\end{cases}
\]
the "capacity" $\sr$ is given by 
\[
\sr(q,q')=\begin{cases}
\omega \text{ if } q=q' \in \{s, t_1, \dots, t_k\}, \\
1 \text{ if for some } i \in \{1, \dots, k\}, q=r_i \text{ and } q' \in I_i, \\
0 \text{ otherwise},
\end{cases}
\]
and the "capacity" $\fr$ is given by
\[
\fr(q,q')=\begin{cases}
\omega \text{ if } q=q' \in \{s, t_1, \dots, t_k\}, \\
1 \text{ if } q \in F_1 \text{ and } q'=r'_1, \\
1 \text{ if for some } i \in \{2, \dots, k\}, q \in F_i \text{ and } q' = r_i, \\
0 \text{ otherwise}.
\end{cases}
\]

We now prove that this construction indeed gives a reduction from "NFA intersection" to "simple SFP".

\begin{theorem}
The intersection of $\L(\aut_i)$ is non-empty if and only if for all $n$, there is a "capacity word" $w \in \Capa^*$ and a "flow word" $\bfl f \leq w$ such that $n \cdot s \flowpath{\bfl f} n \cdot t_k$.
\end{theorem}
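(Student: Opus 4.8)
The plan is to prove the two implications separately, reusing the machinery already built for $\gadg k$. For the "only if" direction, suppose $u = \sigma_1 \cdots \sigma_m$ lies in $\bigcap_i \L(\aut_i)$ and fix, for each $i$, an accepting run $q_i^0 q_i^1 \cdots q_i^m$ of $\aut_i$ on $u$ (so $q_i^0 \in I_i$, $q_i^m \in F_i$, and $(q_i^{j-1},\sigma_j,q_i^j)\in\Delta_i$). I would re-run the induction in the proof of Lemma~\ref{lem:existsword} essentially verbatim, changing only its base case ($i=1$): where that proof moves one "token" from $s$ to $t_1$ with the "capacity word" $a_1 b_1$, I would use instead $a_1 \cdot \sr \cdot c_{\sigma_1} \cdots c_{\sigma_m} \cdot \fr \cdot b_1$. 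Applied to a "configuration" with a single "token" in each of $r_2,\dots,r_k$ (and the remaining "tokens" on $s$ and the $t_j$'s), the "capacity" $a_1$ moves one "token" from $s$ to $r_1$; $\sr$ sends the "token" now in $r_i$ to $q_i^0\in I_i$; the letters $c_{\sigma_1},\dots,c_{\sigma_m}$ drive these $k$ "tokens" synchronously along the chosen accepting runs; $\fr$ returns the one in $F_1$ to $r_1'$ and the one in $F_i$ to $r_i$ for $i\ge 2$; and $b_1$ moves the "token" in $r_1'$ to $t_1$. Since all of $\sr$, $c_\sigma$, $\fr$, $b_1$ carry $\omega$ self-loops on $s,t_1,\dots,t_k$, the "tokens" not involved in this traversal stay put, so the net effect is exactly that of the original $b_1$ of $\gadg k$ (one "token" moved from $s$ to $t_1$, everything else unchanged); hence the inductive step carries over and $n\cdot s \flowpath{\bfl f} n\cdot t_k$ for every $n$.

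For the "if" direction I would take $n=k$: suppose $w\in\Capa^*$ and $\bfl f = f_1\cdots f_N \le w$ satisfy $k\cdot s \flowpath{\bfl f} k\cdot t_k$, and write $c_l = \pre(f_l)$. The first step is an analogue of Lemma~\ref{lem:synchronize}: there is an index $l$ with $c_l(r_i)=1$ for every $i\in\{1,\dots,k\}$ (equivalently, since $n=k$, $c_l = r_1 + \cdots + r_k$). I would obtain this by observing that $\bfl f$ splits into \emph{automaton traversals} — maximal factors during which some "state" of $Q_1\cup\dots\cup Q_k$ is occupied — separated by factors using only $a_j,b_j$: inspecting the "capacity" table, $a_j$, $b_j$ and $\sr$ have no outgoing edge from any automaton "state", while $\sr$ is the only "capacity" with an edge \emph{into} the automata, and $\fr$ is the only "capacity" able to move a "token" out of an automaton (only from final "states", and only when every occupied automaton "state" is final, emptying all automata at once). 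Thus each traversal has the shape $\sr \cdot c_{\sigma_1}\cdots c_{\sigma_m} \cdot \fr$; it parks the "tokens" of $r_1,\dots,r_k$ into the automata and returns them, leaving $r_2,\dots,r_k$ unchanged and sending the $r_1$-"token" (if present) to $r_1'$, from which $b_1$ is then the only applicable "capacity". Collapsing each such traversal-plus-$b_1$ into a single $\gadg k$-style $b_1$ move, deleting the vacuous traversals, and keeping the invariant that every "configuration" with empty automata carries at most one "token" on each $r_j$, turns $\bfl f$ into a "flow word" of $\gadg k$ from $k\cdot s$ to $k\cdot t_k$; Lemma~\ref{lem:synchronize} then produces a "configuration" with one "token" in each $r_i$, which, having empty automata, also occurs along $\bfl f$.

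It then remains to extract an accepted word from the synchronization point $l$. Since $c_l(r_1)=1$ and $\sr$ is the unique "capacity" with an outgoing edge from $r_1$ (a routine inspection of the "capacity" table), the $l$-th "capacity" is $\sr$; as $n=k$ this forces $c_{l+1}=\post(f_l)=q_1^0 + \cdots + q_k^0$ with $q_i^0\in I_i$. From $c_{l+1}$ onward the automata are nonempty, so (by the structural analysis above) $\bfl f$ reads a factor $c_{\sigma_1}\cdots c_{\sigma_m}$ followed by $\fr$; each $c_{\sigma_j}$ forces every automaton "token" to advance along a $\sigma_j$-labelled transition of its automaton, producing "states" $q_i^j$ with $(q_i^{j-1},\sigma_j,q_i^j)\in\Delta_i$, and applicability of $\fr$ forces $q_i^m\in F_i$ for every $i$. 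Hence $\sigma_1\cdots\sigma_m$ is accepted by each $\aut_i$ via the run $q_i^0 q_i^1 \cdots q_i^m$, so $\bigcap_i \L(\aut_i)\neq\varnothing$.

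The main obstacle is the synchronization lemma for the enriched "gadget": making the "collapse every traversal" argument rigorous — or, equivalently, re-running the induction of Lemma~\ref{lem:synchronize} while carrying the invariant above and the bookkeeping of where "tokens" sit during a traversal — is where essentially all of the case analysis on which "capacities" are applicable in which "configuration" lives. The other two parts are direct adaptations of Lemmas~\ref{lem:existsword} and~\ref{lem:synchronize}, and the "capacity"-table inspections (e.g. that only $\sr$ leaves $r_1$ and only $\fr$ leaves an automaton) are mechanical.
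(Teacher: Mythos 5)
Your proposal matches the paper's proof in both directions: the forward implication adapts Lemma~\ref{lem:existsword} with $a_1\cdot\sr\cdot u\cdot\fr\cdot b_1$ in place of $a_1 b_1$, and the converse takes $n=k$, adapts Lemma~\ref{lem:synchronize} to find a configuration with one token on each $r_i$, then argues that the applicable capacities are forced to be $\sr$, a block of $c_\sigma$'s tracking synchronous runs, and $\fr$ requiring all occupied automaton states to be final. The only difference is cosmetic: where the paper invokes a ``straightforward adaptation'' of Lemma~\ref{lem:synchronize}, you sketch an explicit traversal-collapsing argument, which is the same idea carried out at slightly greater detail.
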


\begin{proof}
Let us first assume that $\bigcap_i \L(\aut_i) \neq \varnothing$, and let $u \in \bigcap_i \L(\aut_i)$. Then it is straightforward to adapt the proof of Lemma~\ref{lem:existsword} to show that the "word of capacities" 
\[
w_n = a_k(a_{k-1}(\dots(a_1 \cdot \sr \cdot u \cdot \fr \cdot b_1)^{n-k+1}\dots)^{n-2} b_{k-1})^{n-1} b_k
\]
goes from $n \cdot s$ to $(n-1) \cdot s + t_k$.
Consequently, the "word of capacities"\footnote{With the convention that $v^n$ is the empty word when $n$ is a negative integer} 
$w = w_n w_{n-1} \dots w_1$ "goes from" $n \cdot s$ to $n \cdot t_k$.

Conversely, assume that for all $n$ there is a "word of capacities" $w \in \Capa^*$ and a "flow word" $\bfl f \leq w$ "going from" $n \cdot s$ to $n \cdot t_k$, and let $n = k$. 
Let $N = |w|$ and let $c_l = \pre(f_l)$ for $l \in \{1, \dots, N\}$. 
Again, by a straightforward adaptation of Lemma~\ref{lem:synchronize} there exists an index $l_0$ such that for all $i \in \{1, \dots, k\}$, $c_{l_0}(r_i)=1$. This implies that $w_{l_0+1} = \sr$, since for any other "capacity" $c$ we have $\pre(c) \ngeq c_{l_0}$. Hence, there are initial states $q^{(1)} \in I_1, q^{(k)} \in I_k$ in each automaton such that $c_{l_0+1} = q^{(1)} + \dots + q^{(k)} + p$, with $p$ a "configuration" of support $\subseteq \{s,t_1,t_2, \dots, t_k\}$.
Let $l_1 = \min\{l \geq l_0+1 \mid w_l \notin \{c_\sigma, \sigma \in \Sigma\}\}.$ Let $L=l_1 - l_0 -1 $, and $u=u_1 \dots u_{L}$ be such that for all $l \in \{1, \dots, L\}$, $w_{l_0+l} = c_{u_i}$. An easy induction shows that for all $l \in \{1, \dots, L\},$ there are "states" $q^{(1)}_l \in \states_1, \dots, q^{(k)}_l \in \states_l$ such that $c_{l_0+l} = q^{(1)}_l + \dots + q^{(k)}_l + p$, and that for each $i \in \{1, \dots, k\}$, $q^{(i)}_1 \xrightarrow{u_1} q^{(i)}_2 \xrightarrow{u_2} \dots \xrightarrow{u_L} q^{(i)}_l$ is a run in $\aut_i$.
Since it is the only "capacity", other than the $c_\sigma$'s, with arrows leaving from $\bigcup \states_i$, it must be that $w_{l_0+L} = \fr$. This implies that for all $i \in \{1, \dots, k\}, q^{(i)}_L \in F_i.$ We conclude that $u$ belongs to each of the $\L(\aut_i)$.
\end{proof}

\section{Conclusions}\label{sec:conclusions}

We showed the decidability of the "stochastic control problem".
Our approach uses "well quasi orders" and the "sequential flow problem", which is then solved using the theory of "regular cost functions". 
As an intermediate we also introduce the "simple sequential flow problem" as a problem of independent interest, 
and provide an \textsc{EXPSPACE} complexity upper bound as well as a \textsc{PSPACE} complexity lower bound.

Together with the original result of~\cite{BDGG17,BDGGG19} in the adversarial setting, our result contributes 
to the theoretical foundations of "parameterised control".
We return to the first application of this model, control of biological systems.
As we discussed the stochastic setting is perhaps more satisfactory than the adversarial one, 
although in our examples complicated behaviours also exist in the stochastic setting,
which are arguably not relevant for modelling biological systems.

We thus pose two open questions.
The first is to settle the complexity status of the "stochastic control problem".
Mascle, Shirmohammadi, and Totzke~\cite{MST19} proved the \textsc{EXPTIME}-hardness of the problem, 
which is interesting because the underlying phenomena involved in this hardness result 
are specific to the stochastic setting (and do not apply to the adversarial setting).
Our algorithm does not even yield elementary upper bounds, leaving a very large complexity gap.
We believe that a good understanding of the "sequential flow problem" and the structure of its positive instances can be useful to make progress in this direction. Closing the complexity gap by either finding a better complexity lower bound or another algorithmic technique appears to be an interesting and challenging problem.

The second question is towards more accurately modelling biological systems:
can we refine the stochastic control problem by taking into account the synchronising time of the controller,
and restrict it to reasonable bounds?

\section*{Acknowledgements}
We thank Nathalie Bertrand and Blaise Genest for introducing us to this fascinating problem, 
and the preliminary discussions at the Simons Institute for the Theory of Computing in Fall 2015. 
We also thank Arnaud Sangnier and Mahsa Shirmohammadi for interesting discussions about the sequential flow problem.

\bibliographystyle{alpha}
\bibliography{bib.bib}

\end{document}